\theoremstyle{plain}
\newtheorem{thm}{Theorem}[section]
\newtheorem{lem}[thm]{Lemma}
\newtheorem{prop}[thm]{Proposition}
\newcommand{\real}{\ensuremath {\mathbb R} }
\newcommand{\nat}{\ensuremath {\mathbb N} }
\newcommand{\mbf}[1] {\text{\boldmath$#1$}}
\newcommand{\remove}[1] {}
\newcommand{\st}{\;:\;}
\newcommand{\ex} {{\bf E}}
\newcommand{\pr} {{\bf Pr}}
\newcommand{\tp} {{\mathtt{p}}}
\newcommand{\tn} {{\mathtt{n}}}
\newcommand{\tps} {{\mathtt{ps}}}
\newcommand{\tns} {{\mathtt{ns}}}
\newcommand{\tpu} {{\mathtt{pu}}}
\newcommand{\tnu} {{\mathtt{nu}}}
\newcommand{\tnsf} {{\mathtt{ns1}}}
\newcommand{\tnsr} {{\mathtt{ns2}}}
\newcommand{\lap} {{\lambda_\tp}}
\newcommand{\lan} {{\lambda_\tn}}
\newcommand{\ellps} {{\ell_\tps}}
\newcommand{\ellpu} {{\ell_\tpu}}
\newcommand{\ellnu} {{\ell_\tnu}}
\newcommand{\ellnsf} {{\ell_\tnsf}}
\newcommand{\ellnsr} {{\ell_\tnsr}}
\newcommand{\ellp} {{\ell_\tp}}
\newcommand{\elln} {{\ell_\tn}}
\newcommand{\mtrx}[4] {\begin{pmatrix}#1&#2\\#3&#4\end{pmatrix}}
\DeclareMathOperator{\poly}{poly}
\newcommand{\cA} {\ensuremath{\mathcal A}}
\newcommand{\cC} {\ensuremath{\mathcal C}}
\newcommand{\cD} {\ensuremath{\mathcal D}}
\newcommand{\cF} {\ensuremath{\mathcal F}}
\newcommand{\cH} {\ensuremath{\mathcal H}}
\newcommand{\cI} {\ensuremath{\mathcal I}}
\newcommand{\cL} {\ensuremath{\mathcal L}}
\newcommand{\cN} {\ensuremath{\mathcal N}}
\newcommand{\cP} {\ensuremath{\mathcal P}}
\newcommand{\cS} {\ensuremath{\mathcal S}}
\newcommand{\cU} {\ensuremath{\mathcal U}}
\newcommand{\Fnm}{\cF_{n,m}}
\newcommand{\Cnd}{\cC_{n,\mbf d}}
\newcommand{\Chnd}{\widehat\cC_{n,\db}}
\newcommand{\Cndh}{\cC_{n,\dbh}}
\newcommand{\xib}{{\mbf\xi}}
\newcommand{\ch}{{\widehat c}}
\newcommand{\cbh}{{\widehat{\mbf c}}}
\newcommand{\db}{{\mbf d}}
\newcommand{\dt}{{\widetilde d}}
\renewcommand{\dh}{{\widehat d}}
\newcommand{\dbh}{{\widehat{\mbf d}}}
\newcommand{\deltab}{{\mbf\delta}}
\newcommand{\deltah}{{\widehat\delta}}
\newcommand{\deltabh}{{\widehat{\mbf\delta}}}
\newcommand{\gammah}{{\widehat\gamma}}
\newcommand{\gammabh}{{\widehat{\mbf\gamma}}}
\newcommand{\form}{\phi}
\newcommand{\conf}{\varphi}
\newcommand{\conft}{{\widetilde\conf}}
\newcommand{\confh}{{\widehat\conf}}
\newcommand{\SAT} {\text{\sc Sat}}
\newcommand{\SIMPLE} {\text{\sc Simple}}
\newcommand{\cnf} {$3$-CNF}
\title{A new upper bound for 3-SAT\thanks{Partially
supported by the  and the Spanish CYCIT: TIN2007-66523
(FORMALISM). The first and second authors were also 
partially supported by \emph{La distinci\'o per a
la promoci\'{o} de la recerca} de la 
Generalitat de Catalunya, 2002.}}
\author{J.~D\' \i az$^1$  \qquad L.~Kirousis$^2$
\qquad D.~Mitsche$^1$ \qquad X.~P\' erez-Gim\' enez$^1$ 
\smallskip \\
{\small
$^1$Llenguatges i Sistemes Inform\`{a}tics, UPC, 08034
Barcelona }\\
{\small $^2$ Computer Engineering and Informatics, 
University of Patras, GR-26504 Rio, Patras}  \\
{\small\tt \{diaz,xperez,dmitsche\}@lsi.upc.edu, 
kirousis@ceid.upatras.gr}}
\date{\today}
\begin{document}
\maketitle
\begin{abstract}
We show that a randomly chosen 
$3$-CNF formula over $n$ variables with clauses-to-variables 
ratio at least $4.4898$ is, as  $n$ grows large, asymptotically almost surely unsatisfiable. 
The previous best such bound, 
due to Dubois in 1999, was $4.506$. 
The first such bound, independently 
discovered by many groups of researchers since 1983, 
was $5.19$. Several decreasing  values between 
$5.19$ and $4.506$ were published in the years between. 
The probabilistic techniques we use for the proof are, we believe, of independent interest.
\end{abstract}
\section{Introduction}

\bigskip\noindent
Satisfiability of  Boolean formulas  is 
a problem universally believed to be hard. 
Determining  the source of this hardness will lead, 
as is often stressed, to applications in  domains  
even outside the realm of  mathematics or computer science; 
moreover, and perhaps more importantly, 
it will   enhance our understanding of 
the foundations of computing.

In the beginning of the 90's several groups 
of experimentalists chose to examine 
the source of this hardness from the following viewpoint: 
consider a {\em random} $3$-CNF formula with  a given 
clauses-to-variables ratio, which is known as the {\em density} 
of the formula. What is the probability of it being satisfied 
and how does this probability depend on the density?  
Their simulation results led to the conclusion  that 
if the density is fixed and below a number  
approximately equal to $4.25$, then for large $n$,  
a randomly chosen  formula 
is  almost always  satisfiable, 
whereas if the density is fixed and above $4.25$, 
a randomly chosen formula is, for large $n$, 
almost always  unsatisfiable. More importantly, 
around $4.25$ the complexity of several well known 
complete algorithms for checking satisfiability 
reaches a steep peak 
(see e.g. \cite{kirkpatrick,mitchell92,MonZech}). 
So, in a certain sense, $4.25$ is  the point where 
from an empirical, statistical  viewpoint, 
the ``hard" instances of SAT  are to  be found. 
Similar results were obtained for other 
combinatorial problems, and also for $k$-SAT for values of $k>3$.

These experimental results were followed by an intense 
activity to provide ``rigorous results"    
(the expression often used in this context 
to refer  to theorems). 
Perhaps the most important advance is due to 
Friedgut: in ~\cite{friedgut99}
he proved that there is a sequence of reals 
$(c_n)_n$ such that for any $\epsilon>0$ 
the probability of a randomly chosen 3-CNF-formula with density  
$c_n -\epsilon$ being satisfiable approaches $1$ (as $n\rightarrow \infty$), 
whereas for density  $c_n +\epsilon$, it approaches $0$. 
Intuitively, this means that the transition from 
satisfiability to unsatisfiability is sharp, 
however it is still not known 
if $(c_n)_n$ converges.

Despite the fact that  the convergence of $(c_n)_n$  
is still an open problem, 
increasingly improved upper and lower 
bounds on its terms have been computed 
in a rigorous way by many groups of researchers. 
The currently best lower bound is 
$3.52$~\cite{kkl06,Coppersmith}.

With respect to upper bounds, 
which is the subject of this work, 
the progress was slower but better, 
in the sense that the experimentally established 
threshold is more closely bounded from above, 
rather than from below. 
A na\"{\i}ve application of the first moment 
method yields an upper bound of 5.191 
(see e.g., \cite{franco83}). 
An important advance was made in~\cite{kamath95}, where 
the upper bound was improved to $4.76$. 
In the sequel, the work of  several groups of researchers, 
based on more refined variants  of the  first moment method, 
culminated in the value of 
$4.571$~\cite{DuBo97,kkks98} 
(see the nice surveys~\cite{ksz05,dubsurv} for a complete
sequence of the events). 
The core idea in these works was to use 
the first moment method by computing the expected 
number of not all satisfying truth assignments, 
but only of those among them that  are local maxima 
in the sense of a lexicographic ordering, within 
a degree of locality determined by the 
Hamming distance between  truth assignments 
(considered as binary sequences). 
For degree of locality $1$, this amounts 
to computing the expected number of 
satisfying  assignments that become 
unsatisfying assignments by flipping any of their ``false" 
values (value $0$) to ``true" (value $1$). 
Such assignments are sometimes referred to as 
{\em single-flip} satisfying assignments.

The next big step was taken by 
Dubois et al.~\cite{DuBoMan00}, who showed that  $4.506$ 
is an upper bound. Instead of  considering further 
variations  of satisfiability, 
they limited the domain of computations 
to  formulas that have a 
typical {\em syntactic} characteristic. 
Namely, they considered formulas where 
the cardinality of variables with given numbers 
of occurrences as positive and negative  literals, 
respectively,  approaches a  two dimensional 
Poisson distribution. Asymptotically almost 
all formulas have this typical property 
(we say that such formulas have a Poisson 
{\em 2D degree sequence}). It turns out that the 
expectation of the number of single-flip 
satisfying  assignments is exponentially 
reduced when computed for such formulas. 
To get the afore mentioned upper bound, 
Dubois et al. further limited the domain 
of computations to formulas that are 
{\em positively unbalanced}, i.e.\ formulas where every variable 
has  at least as many occurrences as a positive 
literal as it has as a negated one.

A completely different direction  was recently 
taken in~\cite{maneva08}. Their work was   
motivated by  results on the geometry  of  
satisfying assignments, and especially 
the way they form clusters (components where 
one can move from one satisfying  assignment 
to another by hops of small Hamming distance). 
Most of these results  were originally based on 
analytical, but non-rigorous, techniques of  
Statistical Physics; lately however important 
rigorous advances were made~\cite{achlioptas06,mezard02}.
The value of the upper bound obtained by Maneva and 
Sinclair (see~\cite{maneva08}) was $4.453$, far below any other 
upper bound presently known (including the one in this paper). 
However it was proved assuming a conjecture on 
the geometry of the satisfying truth assignments 
which is presently proved only for $k$-SAT for $k \geq 8$ 
in~\cite{achlioptas06}.

In this paper, we show that $4.4898$ 
is an upper bound. Our approach builds upon previous work. 
It makes use (i) of single-flip satisfying truth assignments, 
(ii) of formulas with a Poisson 2D degree sequence and 
(iii) of positively unbalanced formulas.

 We add to these previously known techniques 
two novel elements that further reduce 
the expectation computed. Our approach is rigorous: 
although we  make use of computer programs, 
the outputs we use are formally justified.   
What is interesting is not that much the numerical 
value we get, although it  constitutes a 
further improvement to a long series of results. 
The main interest lies, we believe,  
on one hand in the new  techniques 
themselves and on the other in the fact that 
putting together so many disparate techniques 
necessitates a delicately balanced proof structure.

First, we start by recursively eliminating 
one-by-one the occurrences of pure literals 
from the random formula, until we get  its 
{\em impure core}, i.e.\ the largest sub-formula 
with no pure literals (a pure literal is one 
that has at least one occurrence in the formula 
but whose negation has none). Obviously this 
elimination has no effect on the satisfiability 
of the formula. Since we consider random 
formulas with a given 2D degree sequence, we first have  
to determine what is the 2D degree sequence of the impure core. 
For this, we use the differential equation 
method~\cite{wormald99}. The setting of the 
differential equations is more conveniently 
carried out in the so called {\em configuration} model, 
where the random formula is constructed by starting 
with  as many labelled copies of each literal  
as its occurrences and then by considering 
random 3D matchings of these copies. The  matchings   
define the clauses. The change of models from the standard one to the configuration model with a Poisson 2D degree sequence is formalized in Lemma~\ref{lem:Fnm}. We also  take 
care of the fact that the configuration model 
allows formulas with (i) multiple clauses and 
(ii) multiple occurrences of the same variable in a clause,  
whereas we are interested in simple formulas, i.e.\ 
formulas where neither (i) nor (ii)  holds. 
For our purposes, it is enough  to bound 
from below the probability of getting a 
simple formula in the configuration model 
by $e^{-\Theta (n^{1/3} \log n)}$,  see 
Lemma~\ref{lem:simple}. 
The differential equations are then  
analytically solved, and we thus obtain  
the 2D degree sequence of the core,  
see Lemma~\ref{lem:dem}.

Second, we require that not only the 2D 
degree sequence is Poisson, but also 
that the numbers of clauses with none, 
one, two and three positive literals, respectively, are close to the expected numbers. Notice that these expected numbers have to reflect the fact that we consider positively unbalanced formulas.
This is formalized in Lemma~\ref{lem:clause}. 

The expectation of the number of satisfying assignments, in the framework determined by all the restrictions above,  
is computed  in Lemma~\ref{lem:hard}. 
This expectation turns out to be given  
by a sum of polynomially many terms of  
functions that are exponential in $n$. 
We estimate this sum by its maximum term, 
using a standard technique. 
However in this case, finding the maximum term 
entails maximizing a function of many variables whose number depends on $n$. 
To avoid  a maximization that depends on $n$ 
we prove a truncation result which allows us 
to consider formulas that have a Poisson 2D 
degree sequence only for {\em light} variables, i.e.\  
variables whose  number of occurrences, 
either as  positive or negated literals,
is at most  a constant independent of $n$.

Then 
we carry out the maximization. 
The technique we use is the standard one 
by Lagrange multipliers. We get a 
complex $3\times3$ system which can be solved numerically. We formally prove that the system does not maximize on the boundary of the system and we make a sweep over the domain which confirms the results of the numerical solution.

Due to lack of space, the proofs of some lemmata are deferred to the appendix of the paper. 

As usual asymptotically almost surely (a.a.s.) will mean with probability tending to $1$  as $n\rightarrow \infty$. All asymptotic expressions as $1-o(1)$ are always with respect to $n$.
Our main result in the paper is the following:
\begin{thm}\label{thm:main}
Let $\gamma=4.4898$ and $m=\lfloor \gamma n\rfloor$.
A random \cnf\ formula in $\Fnm$ (i.e.\ with $n$ variables and $m$ clauses, no repetition of clauses and no repetition of 
variables in a clause) is not satisfiable a.a.s.
\end{thm}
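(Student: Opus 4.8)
The plan is to bound the probability that a random formula in $\Fnm$ is satisfiable by a first–moment quantity evaluated in a more tractable model, and to show that after all the conditioning this quantity is exponentially small in $n$. I would begin with two elementary reductions that do not affect satisfiability: iterated elimination of pure literals, so that it suffices to prove the \emph{impure core} of a random formula is a.a.s.\ unsatisfiable, and renaming variables by their negations, so that we may assume the core is positively unbalanced. In addition, every satisfiable formula has a single-flip satisfying assignment: starting from any satisfying assignment and repeatedly flipping a $0$ to a $1$ whenever satisfaction is preserved terminates at such an assignment. Hence, writing $X$ for the number of single-flip satisfying assignments of the positively unbalanced core, $\pr[\text{core satisfiable}]\le\ex[X]$, and it is enough to prove $\ex[X]=e^{-\Omega(n)}$.

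Next I would run the chain of model changes. By Lemma~\ref{lem:Fnm} it is enough to work in the configuration model with a Poisson 2D degree sequence; by Lemma~\ref{lem:simple}, conditioning the configuration on producing a simple formula costs only a factor $e^{\Theta(n^{1/3}\log n)}$ in all probabilities, which is negligible against the $e^{-\Omega(n)}$ bound we target. Lemma~\ref{lem:dem}, obtained through the differential equation method, provides the exact 2D degree sequence of the impure core; Lemma~\ref{lem:clause} lets us further restrict to cores whose numbers of clauses with $0,1,2,3$ positive literals are the ones expected for a positively unbalanced formula; and the truncation result lets us impose the Poisson 2D profile only on the \emph{light} variables, so that the quantities entering the subsequent optimization are finitely many. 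Inside this restricted ensemble, Lemma~\ref{lem:hard} evaluates $\ex[X]$ as a sum of polynomially many terms, each of the form $e^{n\form+o(n)}$ with $\form$ an explicit function of finitely many nonnegative variables subject to finitely many linear constraints.

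The final step is a constrained maximization. Absorbing the polynomially many terms into the $o(n)$, it suffices to show $\max\form<0$ over the constraint polytope when $\gamma=4.4898$. I would write the Lagrange-multiplier conditions for the maximum, reduce them to a $3\times3$ system, solve it numerically, and then make the argument rigorous by checking analytically that no maximizer lies on the boundary of the polytope and by performing a computer-assisted but formally justified sweep over the interior that certifies $\form<0$ everywhere. Combining everything with Markov's inequality yields $\pr[\text{core satisfiable}]\le\ex[X]\cdot e^{\Theta(n^{1/3}\log n)}=e^{-\Omega(n)}=o(1)$, and by the reductions above a random formula in $\Fnm$ is a.a.s.\ unsatisfiable.

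I expect the main obstacle to be this last step: both the \emph{truncation}, which must replace a maximization whose dimension grows with $n$ by a fixed-dimensional one without sacrificing the bound, and the rigorous (partly numerical) certification that the constrained maximum of $\form$ is strictly negative and attained in the interior. A second, pervasive difficulty is the bookkeeping — the several conditioning events and model changes have to be composed so that nothing more than the $e^{\Theta(n^{1/3}\log n)}$ slack of Lemma~\ref{lem:simple} is lost along the way.
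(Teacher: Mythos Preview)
Your proposal is correct and follows essentially the same route as the paper: the same chain of reductions (pure-literal elimination to the impure core, positive unbalancing, single-flip assignments), the same transfer to the configuration model via Lemmata~\ref{lem:Fnm} and~\ref{lem:simple}, the same use of Lemmata~\ref{lem:dem} and~\ref{lem:clause} to pin down the core's 2D degree sequence and clause-type counts, the same truncation to light variables, and the same Lagrange-multiplier maximization reduced to a $3\times3$ system with a boundary argument and a numerical sweep. The only minor slippage is that the single-flip reduction is valid only for \emph{simple} configurations, so in the configuration model one bounds $\pr(\SAT\wedge\SIMPLE)$ by $\ex[X]$ rather than $\pr(\SAT)$ directly; the paper tracks $\SAT\wedge\SIMPLE$ jointly through the whole chain (see~\eqref{eq:whycore}--\eqref{eq:mainthm2}), which is exactly the ``bookkeeping'' concern you flag at the end.
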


\section{Background and Technical highlights.}

Given a set of $n$ Boolean variables, and 
let $m=\lfloor \gamma n\rfloor$. 
Let $\Fnm$ be the set of \cnf\ formulas with $n$ variables 
and $m$ clauses, where repetition of clauses or repetition of 
variables in a clause is not allowed. 
We also denote by $\Fnm$ the probability space of formulas in $\Fnm$ 
drawn with uniform probability. Throughout the paper, we fix the value 
$\gamma=4.4898$ and  prove for that value a random \cnf\ formula is not satisfiable with
high probability.

Throughout the paper, \emph{scaled} will always mean divided by $n$, and a 
\emph{scaled natural} will be a member of $\frac1n\nat$. 
Given formula $\form\in\Fnm$, we define the following parameters which depend on $\form$:
For any $i,j\in\nat$, let $d_{i,j}$ be the scaled number of variables with $i$ 
positive occurrences and $j$ negative occurrences in $\form$. Then,
\begin{equation}
\sum_{i,j\in\nat} d_{i,j}=1.
\label{eq:sumto1}
\end{equation}
The sequence $\db =(d_{i,j})_{i,j\in\nat}$ is called the
\emph{degree sequence} of $\form$. The scaled number of clauses of $\form$ is denoted by $c$, and can be expressed by 
\begin{equation}
c(\db)=\frac13\sum_{i,j\in\nat} (i+j) d_{i,j}.
\label{eq:sumtoc}
\end{equation}
Note that if $\form\in\Fnm$, then $c$ must additionally satisfy $c=\lfloor \gamma n\rfloor/n$.

Given $\epsilon_1>0$ and any sequence $\xib=(\xi_{i,j})_{i,j\in\nat}$ 
of nonnegative reals with $\sum_{i,j\in\nat} \xi_{i,j}=1$, define 
\begin{align*}
\cN(n,\xib,\epsilon_1)=
\Big\{ & \db=(d_{i,j})_{i,j\in\nat} \st
\sum_{i,j\in\nat} d_{i,j}=1, \quad
\frac{n}{3}\sum_{i,j\in\nat} (i+j) d_{i,j}\in\nat,
\quad\forall i,j\in\nat \quad d_{i,j}n\in\nat,
\\
&\text{and}\quad |d_{i,j}-\xi_{i,j}|\le\epsilon_1,
\quad\text{and if $i>n^{1/6}$ or $j>n^{1/6}$ then $d_{i,j}=0$} \Big\}.
\end{align*}
Intuitively $\cN(n,\xib,\epsilon_1)$ can be interpreted as 
the set of degree sequences $\db$ which are close to the ideal sequence 
$\xib$, 
which in general is not a degree sequence since its entries $\xi_{i,j}$ need not be 
scaled naturals. However, if $n$ is large enough, then 
$\cN(n,\xib,\epsilon_1)\ne\emptyset$.  Now we consider the 2D Poisson ideal sequence $\deltab$ defined by
$\delta_{i,j}=e^{-3\gamma}(3\gamma/2)^{i+j}/(i!j!)$.
The following lemma reflects the fact that almost all $\form\in\Fnm$ have a degree sequence $\db$ which is close to $\deltab$. A proof of an analogous result can be found in~\cite{DuBoMan00} (the main difference is that we have to apply Markov's inequality in the end to show that there are no variables of degree $n^{1/6}$ or larger).
\begin{lem}\label{lem:Fnm}
Let $\db$ be the degree sequence of a random $\form\in\Fnm$. 
For any $\epsilon_1>0$, we have that $\pr_{\Fnm} (\db\in\cN(n,\deltab,\epsilon_1))=1-o(1)$.
\end{lem}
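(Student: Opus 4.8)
The plan is to compare the uniform model $\Fnm$ with the standard $G(n,m)$-style random formula model where the $m$ clauses are drawn independently and uniformly (with replacement) from all $\binom{n}{3}2^3$ possible clauses, and then to transfer a Poisson-type concentration statement about degree sequences from that independent model to $\Fnm$. First I would set up the auxiliary model: draw $m$ ordered triples of literals independently, each literal chosen uniformly among the $2n$ literals. In this model the vector of literal-occurrence counts is, after the usual balls-in-bins analysis, close to a multinomial on $2n$ cells with $3m$ balls; grouping the positive and negative count of each variable, the scaled counts $(d_{i,j})$ concentrate around the 2D Poisson law $\delta_{i,j}=e^{-3\gamma}(3\gamma/2)^{i+j}/(i!j!)$, because $3m/(2n)\to 3\gamma/2$ and the two coordinates become asymptotically independent Poissons. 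Concretely, one shows $\ex d_{i,j}\to\delta_{i,j}$ and a second-moment (or Azuma, viewing clauses as the exposed bits) bound gives $\var(d_{i,j})=o(1)$, so Chebyshev yields $|d_{i,j}-\delta_{i,j}|\le\epsilon_1$ a.a.s.\ for each fixed $(i,j)$; since $\sum_{i,j}\delta_{i,j}=1$ the tail $\sum_{i+j> K}d_{i,j}$ is a.a.s.\ small, and this finitely many individual statements together with the tail bound give $\sum_{i,j}|d_{i,j}-\delta_{i,j}|\le\epsilon_1$, which in particular forces $|d_{i,j}-\delta_{i,j}|\le\epsilon_1$ for all $i,j$ simultaneously.

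Next I would handle the two discrepancies between this independent model and $\Fnm$. The independent model allows repeated clauses and clauses with a repeated variable, whereas $\Fnm$ forbids both; but conditioning the independent model on the event $\SIMPLE$ that the formula is simple gives \emph{exactly} the uniform distribution on $\Fnm$ (every simple formula on $n$ variables with $m$ clauses arises from the same number of ordered draws), so $\pr_{\Fnm}(A)=\pr(A\mid\SIMPLE)\le \pr(A)/\pr(\SIMPLE)$ for any event $A$. The classical computation shows $\pr(\SIMPLE)=\Theta(1)$ (it tends to a positive constant $e^{-\alpha}$ for an explicit $\alpha$ depending on $\gamma$, since both the expected number of repeated-variable clauses and the expected number of repeated clause-pairs are $O(1)$), hence a.a.s.\ events transfer: if $\pr(\db\notin\cN)=o(1)$ in the independent model then $\pr_{\Fnm}(\db\notin\cN)=o(1)$ as well.

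There remains one genuinely new wrinkle, and this is the step I expect to be the only real obstacle: the set $\cN(n,\deltab,\epsilon_1)$ additionally demands that $d_{i,j}=0$ whenever $i>n^{1/6}$ or $j>n^{1/6}$, i.e.\ there are \emph{no} variables of degree exceeding $n^{1/6}$. This is the point flagged in the paper's remark that one must apply Markov's inequality at the end. In the independent model the number of variables with at least $t:=\lceil n^{1/6}\rceil$ occurrences of a fixed sign is stochastically dominated by a sum over the $n$ variables of indicators of a $\mathrm{Bin}(3m,1/(2n))$ variable being $\ge t$; the probability of a single such deviation is at most, crudely, $\binom{3m}{t}(1/(2n))^t\le (3m/(2n))^t/t! = (3\gamma/2)^t/t!$, which for $t=\Theta(n^{1/6})$ is $e^{-\Omega(n^{1/6}\log n)}$. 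Multiplying by the union bound over $2n$ (variable, sign) pairs still leaves $e^{-\Omega(n^{1/6}\log n)}=o(1)$, so a.a.s.\ no such heavy variable exists; combined with the preceding paragraph this shows $\pr_{\Fnm}(\db\in\cN(n,\deltab,\epsilon_1))=1-o(1)$. The care needed here is only in making the Poisson-tail estimate uniform and checking it survives the $\Theta(n)$-fold union bound and the $1/\pr(\SIMPLE)=\Theta(1)$ loss, none of which is serious; this is why the paper says the result is analogous to the one in~\cite{DuBoMan00} with the heavy-variable truncation as the sole addition.
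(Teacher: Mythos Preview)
Your proposal is correct and follows essentially the same route as the paper's (omitted) proof, which simply refers to the argument in~\cite{DuBoMan00}: pass to an i.i.d.\ clause model, use a balls-in-bins/second-moment argument to get concentration of each $d_{i,j}$ around $\delta_{i,j}$, and transfer back to $\Fnm$ via conditioning on simplicity (which has $\Theta(1)$ probability). Your treatment of the $n^{1/6}$ truncation via a binomial tail bound plus union bound is precisely the ``Markov's inequality at the end'' that the paper singles out as the only addition to~\cite{DuBoMan00}.
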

%
%

Given a fixed degree sequence $\db=(d_{i,j})_{i,j\in\nat}$ satisfying~\eqref{eq:sumto1} 
and such that $c=c(\db)$ defined by~\eqref{eq:sumtoc} is also a scaled natural, 
we wish to generate \cnf\ formulas with that particular degree sequence $\db$. 
A natural approach to this is to use the \emph{configuration model}. 
A \emph{configuration} $\conf$ with degree sequence $\db=(d_{i,j})_{i,j\in\nat}$ 
is constructed as follows: consider $n$ variables and the corresponding 
$2n$ literals $x_1,\bar x_1\ldots,x_n,\bar x_n$; each literal 
has a certain number of distinct labelled \emph{copies} in a way that 
the scaled number of variables with $i$ positive copies and $j$ 
negative copies is $d_{i,j}$; then partition the set of copies 
into sets of size $3$, which we the call clauses of $\conf$.
Let $\Cnd$ be the set of all configurations with degree sequence $\db$, 
and we also denote by $\Cnd$ the probability space on the set $\Cnd$ 
with the uniform distribution.

A \cnf\ \emph{multi-formula} is a formula with possible repetition of variables 
in one clause and/or 
possible repetition of clauses. A \emph{simple formula} is a formula in $\Fnm$.
Let $\pi$ be the  projection from $\Cnd$  to \cnf\ multi-formulas  obtained 
by unlabelling the copies of each literal. 
A  configuration $\conf\in\Cnd$ is satisfiable if $\form=\pi(\conf)$ 
is satisfiable. A configuration $\conf\in\Cnd$ 
is \emph{simple} iff $\form=\pi(\conf)$ is a simple formula, i.e.\ does not have repetition of variables or clauses. 
Notice that the number of anti-images of a simple formula $\form$ with degree sequence $\db$ under $\pi$  
does not depend on the particular choice of $\form$. 
Hence,
\begin{equation}
\pr_{\Fnm}(\form \text{~is~} \SAT\mid\db) = \pr_{\Cnd}(\conf \text{~is~} \SAT\mid\SIMPLE).
\label{eq:FtoC}
\end{equation}
%

We need a lower bound on the probability that a configuration is simple. 
The following result gives a weak bound which is enough for our purposes. 
\begin{lem}\label{lem:simple}

Let $\epsilon_1>0$ and $\db\in\cN(n,\deltab,\epsilon_1)$. Then 
\[
\pr_{\Cnd}(\SIMPLE) \geq e^{-\Theta(n^{1/3}\log n)},
\]
where the  $e^{-\Theta(n^{1/3}\log n)}$ bound is uniform for all 
$\db\in\cN(n,\deltab,\epsilon_1)$.
\end{lem}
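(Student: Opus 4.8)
The plan is to estimate $\pr_{\Cnd}(\SIMPLE)$ from below by a direct second-moment-free counting argument, comparing the number of simple configurations to the total number of configurations, and controlling the ``badness'' introduced by the few variables of large degree. Since a configuration with degree sequence $\db$ has a fixed number $3m$ of copies and is obtained by a uniform random perfect matching of these copies into triples, the total number of configurations is $M(3m):=(3m)!/(m!\,6^m)$. A configuration fails to be simple in one of two ways: (i) some clause contains two copies of the same variable, or (ii) two clauses are identical as multisets of literals. I would first bound the probability of a ``bad local event'' at a fixed variable or a fixed pair of clauses, then union-bound. For a variable $x$ with $i+j=\delta$ copies, the probability that two of its copies land in the same triple is $O(\delta^2/m)$; summing over all variables this contributes $O\!\big(\sum_{i,j} d_{i,j} n\,(i+j)^2/m\big) = O(n \cdot \bar{\delta^2}/\gamma)$ --- here the key point is that for $\db\in\cN(n,\deltab,\epsilon_1)$ the second moment $\sum_{i,j}(i+j)^2 d_{i,j}$ is bounded by a constant (it is close to the corresponding Poisson moment, and the large-degree truncation at $n^{1/6}$ only helps), so this is $\Theta(n)$ in expectation, hence $\pr(\text{no repeated variable in a clause}) \ge e^{-\Theta(n)}$ by a standard argument. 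Likewise the expected number of repeated clause-pairs is $O(m^2/m^3)=O(1/n)$, so that event is even rarer. A naive union bound thus only gives $e^{-\Theta(n)}$, which is \emph{weaker} than what is claimed; so the real work is to exploit the large-degree truncation more carefully.

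The improvement to $e^{-\Theta(n^{1/3}\log n)}$ must come from the observation that \emph{most} variables have bounded degree and the genuinely dangerous copies are concentrated on variables of degree between, say, a large constant and $n^{1/6}$, of which there are only $o(n)$ (indeed exponentially few, since $\delta_{i,j}$ decays super-exponentially in $i+j$). Concretely, I would split the literal-copies into ``light'' copies (belonging to variables of degree at most a threshold $t$, a slowly growing function or a large constant) and ``heavy'' copies. Condition first on the matching restricted to the heavy copies and the clauses they create; the number of heavy copies is at most $n^{1/6}\cdot(\text{\# heavy variables})$, and choosing the threshold so that the number of heavy copies is $\Theta(n^{1/3})$ (this is where $n^{1/3}$ enters), the probability that these $\Theta(n^{1/3})$ copies avoid all collisions among themselves and with their own variable-mates is $\ge \Theta(n^{1/3})^{-\Theta(n^{1/3})} = e^{-\Theta(n^{1/3}\log n)}$ by a crude ``reveal the copies one at a time, each forbidden from $O(n^{1/3})$ earlier positions'' estimate. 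Then, \emph{given} a good heavy part, the remaining matching on the light copies is itself a uniform configuration on bounded-degree copies, for which the classical result (as in random graphs with bounded degrees, cf.\ the standard configuration-model computation) gives that it is simple with probability bounded below by a positive constant --- here one needs that light variables contribute $O(1)$ to both the first and second moments of the degree. Multiplying the two bounds yields $e^{-\Theta(n^{1/3}\log n)}$.

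The main obstacle --- and the step that needs the most care --- is making the split clean so that the light part is genuinely a ``nice'' bounded-degree configuration after conditioning on the heavy part, and simultaneously checking that the heavy part has only $\Theta(n^{1/3})$ copies for an appropriate choice of threshold, \emph{uniformly} over all $\db\in\cN(n,\deltab,\epsilon_1)$. The uniformity is essentially automatic because every such $\db$ satisfies $|d_{i,j}-\delta_{i,j}|\le\epsilon_1$ and the truncation at $n^{1/6}$, so the tail sum $\sum_{i+j>t} (i+j)\,d_{i,j}\,n$ is controlled by the Poisson tail plus a small error, and one picks $t=\Theta(\log n / \log\log n)$ (or similar) to force the heavy-copy count down to $n^{1/3}$; I would also need to observe that $\cN(n,\deltab,\epsilon_1)$ is nonempty and has degree sequences realizable as configurations, which is already noted in the text. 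One subtlety to handle is that $d_{i,j}$ for $i+j$ near $n^{1/6}$ could in principle be as large as $\epsilon_1$; but since there are only $O(n^{1/3})$ pairs $(i,j)$ with $i,j\le n^{1/6}$, even the crudest bound keeps the heavy-copy count polynomial, and the finer choice of $t$ brings it to $n^{1/3}$. Finally, I would remark that the exact exponent $n^{1/3}\log n$ is not sharp and is chosen only because it is an obvious clean bound that comfortably suffices for later use (it is negligible against the $e^{\Theta(n)}$ scale on which the first-moment computation operates); a $\Theta(n^{1/6}\log n)$ or even $\Theta(\log^2 n)$ bound could likely be extracted with more effort, but there is no need.
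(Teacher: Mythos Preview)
Your light/heavy split has a genuine gap: the claim that one can choose the threshold $t$ so that the number of heavy copies is $\Theta(n^{1/3})$ is false for general $\db\in\cN(n,\deltab,\epsilon_1)$. The constraint $|d_{i,j}-\delta_{i,j}|\le\epsilon_1$ is essentially vacuous for large $(i,j)$, since $\delta_{i,j}$ is negligible there; it only forces $d_{i,j}\le\epsilon_1$. So nothing prevents, say, $d_{\lfloor n^{1/6}\rfloor,0}$ from being a fixed small constant $\eta<\epsilon_1$ (with a compensating $O(\eta)$ reduction of some $d_{i,j}$ at small degree to keep $\sum d_{i,j}=1$). That single entry already produces $\eta n$ variables of degree $\lfloor n^{1/6}\rfloor$, hence $\Theta(n^{7/6})$ copies sitting at near-maximal degree. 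No choice of $t$ helps: for any $t<n^{1/6}$ these are all heavy, and for $t\ge n^{1/6}$ there are no heavy variables at all and you are back to a configuration whose second moment is $\Theta(n^{1/3})$, for which the ``constant probability'' claim from the bounded-degree model fails. Your attempted patch (``only $O(n^{1/3})$ pairs $(i,j)$'') bounds the \emph{number of degree classes}, not the mass in them, and the Poisson-tail calculation $t=\Theta(\log n/\log\log n)$ controls $\delta_{i,j}$, not $d_{i,j}$.

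The paper's argument avoids this entirely by never counting heavy copies. It constructs simple configurations greedily, one clause at a time: take the first remaining copy, belonging to some variable $x$, and match it with two others, excluding (i) the at most $2n^{1/6}$ other copies of $x$ and (ii) the at most $2n^{1/6}\cdot 2\cdot 2n^{1/6}=8n^{1/3}$ copies of variables already sharing a clause with some copy of $x$. Thus at every step only $O(n^{1/3})$ copies are forbidden, and this $n^{1/3}$ is the square of the \emph{maximum} degree---it is completely insensitive to how many variables have large degree. A small reserve set of $\Theta(n^{1/3})$ degree-$1$ variables absorbs the last few copies, and comparing the greedy count to the total number of configurations yields $e^{-\Theta(n^{1/3}\log n)}$, uniformly. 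The conceptual difference is that the exponent comes from $(\max\text{ degree})^2$, not from a tail mass; your proposal conflates the two.
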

Given $\conf\in\Cnd$, a \emph{pure} 
variable of $\conf$ is a variable 
which has a non-zero number of occurrences 
which are either all syntactically positive or 
all syntactically negative. The only literal 
occurring in $\conf$ and all its copies 
are also called \emph{pure}.
If $\conf$ is satisfiable 
and $x$ is a pure variable of $\conf$, 
then there exists some satisfying truth 
assignment of $\conf$ which satisfies all 
copies of $x$ in $\conf$. Hence, in order 
to study the satisfiability of a 
$\conf\in\Cnd$, we can 
satisfy each pure variable in $\conf$ and 
remove all clauses containing a copy of that variable. 
For each $\conf\in\Cnd$, let $\conft$ be the configuration 
obtained by greedily 
removing all pure variables
and their corresponding clauses 
from $\conf$. 
This $\conft$ is 
independent of the particular elimination order 
of pure literals and is called the \emph{impure core} of $\conf$. In fact, in our analysis we will eliminate only one clause containing one copy of a pure literal at a time (the $\conft$ obtained still remains the same).
\remove{This fact can be proved in a completely analogous 
way to the standard argument that shows 
that the obtention of the $k$-core of a 
graph does not depend on how we remove 
the vertices of degree less than $k$.}
Note that $\conf$ is satisfiable 
iff $\conft$ is satisfiable. \remove{since 
we are allowed to choose a truth assignment 
which satisfies all literals  
iteratively removed during the process 
for obtaining $\conft$.} Moreover, 
if $\conf$ is simple then $\conft$ is 
also simple (but the converse is not necessarily true).

Furthermore, let $\confh$ be the configuration obtained from
 $\conft$ by positively unbalancing all variables, i.e.\
switching the syntactic sign of those 
variables having initially more 
negative than positive occurrences in $\conft$.
Let $\Chnd$ denote the probability space 
of configurations $\confh$, where $\conf$ 
was  chosen from $\Cnd$ with 
uniform probability. Note that 
the probability distribution in $\Chnd$ 
is not necessarily uniform. Since 
the simplicity and the satisfiability 
of a configuration are not affected by positively unbalancing the variables, we have
\begin{equation}
\pr_{\Cnd}(\conf \text{~is~} \SAT\wedge\SIMPLE) 
\le \pr_{\Chnd}(\confh \text{~is~} \SAT\wedge\SIMPLE).
\label{eq:whycore}
\end{equation}
Let the random variable $\dbh$ be the degree sequence 
of a random configuration in $\Chnd$. 
We prove in the following result that 
if the original 
$\db$ is close to the ideal sequence $\deltab$, 
then with high probability $\dbh$ must 
be close to the ideal sequence 
$\deltabh=(\deltah_{i,j})_{i,j\in\nat}$
defined by
\[
\deltah_{i,j} = \begin{cases}
2e^{-3\gamma b}\frac{(3\gamma b/2)^{i+j}}{i!j!},
& \text{if } i > j,
\\
e^{-3\gamma b}\frac{(3\gamma b/2)^{i+j}}{i!j!},
& \text{if } i=j,
\\
0,& \text{if } i < j, 
\end{cases}
\]
where $b=(1-t_\cD/\gamma)^{2/3}$ and $t_\cD$ is the scaled number of steps 
in the pure literal elimination algorithm.
%
\begin{prop}\label{lem:dem}
Given $\epsilon_2>0$, there exists $\epsilon_1>0$ and $0<\beta<1$ such 
that for any $\db\in\cN(n,\deltab,\epsilon_1)$
\[
\pr_{\Chnd}\Big(\dbh\in\cN(n,\deltabh,\epsilon_2)\Big) = 1-O(\beta^{n^{1/2}}).
\]
Moreover, for each $\dbh\in\cN(n,\deltabh,\epsilon_2)$, 
the probability space $\Chnd$ conditional upon having 
degree sequence $\dbh$ has the uniform distribution (i.e.\ $\Chnd$ conditional upon a fixed $\dbh$ behaves exactly as $\Cndh$).
\end{prop}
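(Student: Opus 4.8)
The plan is to analyze the one-clause-at-a-time pure literal elimination as a discrete random process on $\Cnd$ and to track the degree sequence with Wormald's differential equation method~\cite{wormald99}. At each step one picks, among all copies of pure literals present, a uniformly random one, reveals the clause of $\conf$ containing it, deletes that clause, and updates the degrees of the (at most three) variables touched. The enabling fact is that the configuration model is exchangeable: conditioned on the residual degree sequence after any number of such deletions, the unrevealed part of the matching is uniform, so the two partners of the processed pure copy may be taken, up to an $O(1/n)$ error, to be uniformly random among the remaining copies, and the degree-sequence process becomes (approximately) a Markov chain driven by the current degree sequence alone. Because the initial distribution $\deltab$ is symmetric in $i\leftrightarrow j$ and the procedure treats positive and negative literals symmetrically, everything in sight stays symmetric in $i\leftrightarrow j$; this is exactly what will yield the clean factor-of-two fold at the positive-unbalancing step.

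As state variables I would use, for each light pair $(i,j)$ (say $i+j\le K$ for a large constant $K$) together with $(0,0)$, the scaled number $Y_{i,j}$ of variables of degree $(i,j)$, plus the scaled number of steps. Writing the one-step expected increments in terms of the total number $\sum_{a,b}(a+b)Y_{a,b}$ of remaining copies, the number $\sum_{i\ge1}i(Y_{i,0}+Y_{0,i})$ of pure copies, and the fluxes caused by removing one pure copy and two uniform copies, produces an autonomous system of differential equations. The system is infinite, so I would truncate it to the light pairs and absorb the contribution of heavy variables into a uniformly small error term --- this is where the truncation device promised in the introduction is used --- noting also that degrees only decrease, so no variable of degree above $n^{1/6}$ is ever created and, since $\db\in\cN(n,\deltab,\epsilon_1)$, none exists initially, so the final constraint in the definition of $\cN(n,\deltabh,\epsilon_2)$ holds automatically. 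Wormald's theorem then gives that, up to the halting time $t_\cD$ (the first time the scaled number of pure copies vanishes --- well defined, and strictly below $\gamma$ since $\gamma=4.4898$ lies well above the pure-literal threshold, so the core is nonempty), the true trajectory tracks the differential-equation solution within $o(1)$, with failure probability $O(\beta^{n^{1/2}})$ for some $\beta<1$; the exponent $n^{1/2}$ is an artifact of the concentration step and in any case stays comfortably below the $n^{1/3}\log n$ of Lemma~\ref{lem:simple}, which is what matters when the two bounds are later combined.

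Next I would read off the limit: solving the differential equations and evaluating at $t_\cD$ identifies the limiting scaled degree sequence of $\conft$, with the halting condition pinning down the relevant parameter as $b=(1-t_\cD/\gamma)^{2/3}$. Positive unbalancing flips the sign of every variable with more negative than positive occurrences, and since the limit is symmetric in $i\leftrightarrow j$ this folds it onto $\deltabh$ (doubling the mass for $i>j$, leaving $i=j$, emptying $i<j$), so $\dbh\in\cN(n,\deltabh,\epsilon_2)$ with the stated probability. That $\epsilon_1$ may be chosen small in terms of $\epsilon_2$ is legitimate because the differential-equation solution and the halting time depend Lipschitz-continuously on the initial sequence (Gronwall), so an $\epsilon_1$-perturbation of $\deltab$ moves the final sequence by at most $\epsilon_2$ once $n$ is large. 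For the ``moreover'' part I would argue by exchangeability again: conditioned on which clauses were deleted, on their contents, and on the residual degrees of the surviving variables, the configuration induced on the surviving copies is uniform given those residual degrees, hence $\Cnd$ conditioned on the core degree sequence is uniform; and since positive unbalancing is a deterministic relabelling of variables, it carries the uniform distribution to the uniform distribution, so $\Chnd$ conditioned on a fixed $\dbh$ is distributed as $\Cndh$ --- the hypergraph analogue of the standard statement that the core of a random hypergraph, conditioned on its degree sequence, is uniform.

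The main obstacle, I expect, is the rigorous execution of the truncation-plus-Wormald step: controlling the infinitely many coordinates, verifying the boundedness and Lipschitz hypotheses of the differential equation method on a domain kept away from the degenerate part of the boundary (where the number of remaining pure copies, which appears in a denominator of the trend functions, becomes small), handling $t_\cD$ correctly as a stopping time, and keeping the deviation probability as small as claimed; the bookkeeping needed to show that heavy variables never disturb the evolution of the light part is the piece most likely to be delicate.
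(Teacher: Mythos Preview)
Your overall strategy matches the paper's: track the one-clause-at-a-time pure-literal elimination via Wormald's differential equation method, invoke continuous dependence on initial data for the $\epsilon_1\to\epsilon_2$ passage, fold by the $i\leftrightarrow j$ symmetry at the unbalancing step, and deduce conditional uniformity from exchangeability of the configuration model.

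The technical handling of the ODE system differs. The paper does not truncate to a finite degree cutoff; instead it exploits the \emph{triangular} structure of the drift --- each $dy_{i,j}/dt$ depends only on $y_{i,j}$, $y_{i+1,j}$, $y_{i,j+1}$ --- and applies a version of Wormald's theorem in which the trend functions can be ordered so that each depends only on its predecessors. This sidesteps the boundary bookkeeping you anticipate for truncation, and it lets the paper write down and verify a closed-form solution
\[
y_{i,j}(t)=\sum_{k\ge i}\sum_{\ell\ge j}d(k,\ell)\binom{k}{i}\binom{\ell}{j}b(t)^{i+j}\bigl(1-b(t)\bigr)^{k+\ell-i-j},
\]
which under the Poisson initial data collapses to the Poisson form with parameter $3\gamma b/2$. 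This explicit solution is how $\deltabh$ is actually identified, a step you leave implicit.

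Two small corrections. First, the ``truncation device'' mentioned in the introduction concerns the first-moment computation in Lemma~\ref{lem:hard} (restricting the maximization to light variables), not the differential-equation analysis here. Second, the denominator in the trend functions is the \emph{total} number of remaining copies $\ell(t)=3(\gamma-t)$, not the pure count $y_0(t)$; since the process halts at $t_\cD\approx 0.15$ with $\ell(t_\cD)$ well away from zero, there is no degeneracy to manage near the stopping time. The $\beta^{n^{1/2}}$ failure probability comes from taking $\alpha=1/6$ in the paper's form of Wormald's theorem.
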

Let $\dbh\in\cN(n,\deltabh,\epsilon_2)$. Then, each $\conf\in\Cndh$ has a scaled number of clauses of $\ch=c(\dbh)$ (see~\eqref{eq:sumtoc}). Moreover, let $\ellp$ and $\elln$ be the scaled number of copies in $\conf$ of positive and of negative literals respectively. Then
\begin{equation}\label{eq:lpln_d}
\ellp(\dbh)=\sum_{i,j\in\nat} i \dh_{i,j},\qquad \elln(\dbh)=\sum_{i,j\in\nat} j \dh_{i,j}.
\end{equation}
Given any fixed $\conf\in\Cndh$ and for $k\in\{0,\ldots,3\}$, let $\ch_k$ 
be the scaled number of clauses in $\conf$ containing exactly $k$ positive copies 
(clauses of \emph{syntactic type k}). We call $\cbh=(\ch_0,\ldots,\ch_3)$ the 
\emph{clause-type sequence} of $\conf$. By definition 
\begin{equation}\label{eq:lpln_c}
\ch_1+2\ch_2+3\ch_3 = \ellp, \qquad
3\ch_0+2\ch_1+\ch_2 = \elln,
\end{equation}
and by adding the equations in~\eqref{eq:lpln_c}, $\ch_0+\cdots+\ch_3=\ch$. The  
$\ch_0,\ldots,\ch_3$ are random variables in $\Cndh$, but the next result 
shows that if $\dbh$ is close enough to $\deltabh$, then $\ch_0,\cdots,\ch_3$ as well as their sum $\ch_0+\cdots+\ch_3=\ch$ are concentrated with high probability. In order to see this, we need to define $\gammah=c(\deltabh)$, $\lap=\ellp(\deltabh)$ and $\lan=\elln(\deltabh)$ 
(see~\eqref{eq:sumtoc} and~\eqref{eq:lpln_d}),
which can be interpreted as the limit of $\ch$, $\ellp$ and $\elln$ respectively when $\dbh$ approaches $\deltabh$. In terms of these numbers, we thus define for all $k\in\{0,\ldots,3\}$
\begin{equation}\label{eq:gammas}
\gammah_k = \binom{3}{k}\frac{\lap^k \lan^{3-k}}{(\lap+\lan)^3} \, \gammah
\end{equation}
and also $\gammabh=(\gammah_0,\ldots,\gammah_3)$. Then we have
$\gammah_1+2\gammah_2+3\gammah_3 = \lap$, 
$3\gammah_0+2\gammah_1+\gammah_2 = \lan$ and
$\gammah_0+\gammah_1+\gammah_2+\gammah_3 = \gammah$.

The next result shows that when $\dbh$ is close enough to $\deltabh$, 
then each $\ch_k$ is close to the corresponding $\gammah_k$. Indeed, given $\epsilon>0$ and for any  $\dbh\in\cN(n,\deltabh,\epsilon_2)$, 
let $\Cndh^\epsilon$ be the set of all $\conf\in\Cndh$ such that for $k\in\{0,\ldots,3\}$,
$|\ch_k-\gammah_k|\le\epsilon$.  We also denote by $\Cndh^\epsilon$ the corresponding uniform probability space.
\begin{lem}\label{lem:clause}
Given $\epsilon>0$, there is $\epsilon_2>0$ and $0<\beta<1$ such that for any 
$\dbh\in\cN(n,\deltabh,\epsilon_2)$,
\[
\pr_{\Cndh}(\Cndh^\epsilon) = 1 - O(\beta^n).
\]
\end{lem}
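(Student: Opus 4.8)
The plan is to compute $\ex_{\Cndh}[\ch_k]$ exactly, check that it lies within $\epsilon/2$ of $\gammah_k$ once $\epsilon_2$ is small and $n$ large, and then show each $\ch_k$ is exponentially concentrated about its mean by a bounded-differences argument for the uniform random matching underlying the configuration model; a union bound over $k\in\{0,\ldots,3\}$ then finishes it, and the concentration of $\ch=\sum_k\ch_k$ comes for free.

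First I would record a purely arithmetic fact about $\cN$: for $\epsilon_2$ small enough, every $\dbh\in\cN(n,\deltabh,\epsilon_2)$ has $\ellp(\dbh)$, $\elln(\dbh)$, $\ch(\dbh)$ close to $\lap$, $\lan$, $\gammah$ respectively. This uses that the entries of $\deltabh$ decay at a Poisson rate, so the pairs $(i,j)$ removed by the $n^{1/6}$-truncation carry $o(1)$ mass and $o(1)$ first-moment weight, and that for $\epsilon_2$ small the bounds $|\dh_{i,j}-\deltah_{i,j}|\le\epsilon_2$ together with $\sum_{i,j}\dh_{i,j}=1$ force $\sum_{i,j} i\,\dh_{i,j}$, $\sum_{i,j} j\,\dh_{i,j}$ and $\tfrac13\sum_{i,j}(i+j)\dh_{i,j}$ to be close to their $\deltabh$-counterparts. (This is the one spot I would want to re-examine, to pin down exactly how much closeness the definition of $\cN$ encodes; it is also what supplies the uniform bounds $\ch\le\gammah+1$ and $\ellp,\elln$ bounded below by a positive constant that I use below.) With this in hand, a uniform element of $\Cndh$ is just a uniform partition of the $\ellp n$ positive and $\elln n$ negative copies — note $\ellp n+\elln n=3\ch n$ with $\ch n\in\nat$ — into $\ch n$ triples, so by linearity of expectation
\[
\ex_{\Cndh}[\ch_k n]=\ch n\cdot\frac{\binom{\ellp n}{k}\binom{\elln n}{3-k}}{\binom{3\ch n}{3}}
=\binom{3}{k}\,\frac{\ellp^{k}\elln^{3-k}}{(\ellp+\elln)^{3}}\,\ch\,n+O(1),
\]
where the second equality expands the falling factorials using $\ellp n,\elln n=\Theta(n)$. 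Since $(\ellp,\elln,\ch)\mapsto\binom{3}{k}\frac{\ellp^{k}\elln^{3-k}}{(\ellp+\elln)^{3}}\ch$ is continuous where $\ellp+\elln>0$ and maps $(\lap,\lan,\gammah)$ to $\gammah_k$, the previous paragraph gives $|\ex_{\Cndh}[\ch_k]-\gammah_k|\le\epsilon/2$ for $n$ large.

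Next, after fixing the labels, a uniform configuration in $\Cndh$ is obtained by arranging the $3\ch n$ typed copies into the $3\ch n$ clause-slots according to a uniform random permutation $\pi$, and $\ch_k n$ is a function of $\pi$. Applying a transposition to $\pi$ touches at most two clauses and changes each $\ch_k n$ by at most $2$, so McDiarmid's bounded-differences inequality for uniform random permutations gives, for every $k$,
\[
\pr_{\Cndh}\!\Big(\bigl|\ch_k n-\ex_{\Cndh}[\ch_k n]\bigr|\ge\tfrac{\epsilon}{2}n\Big)\le 2\exp\!\big(-c_0\,\epsilon^2 n/\ch\big)
\]
for an absolute constant $c_0>0$; using $\ch\le\gammah+1$ this is $O(\beta^n)$ with $\beta=e^{-c_0\epsilon^2/(\gammah+1)}\in(0,1)$, uniformly over $\dbh\in\cN(n,\deltabh,\epsilon_2)$. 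Combining with the mean estimate of the previous step and a union bound over $k$ yields $\pr_{\Cndh}\big(\exists k\colon |\ch_k-\gammah_k|>\epsilon\big)=O(\beta^n)$, i.e.\ $\pr_{\Cndh}(\Cndh^\epsilon)=1-O(\beta^n)$.

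The main obstacle is the concentration step: we must be able to afford a deviation of order $n$ (a constant after scaling) while keeping the failure probability $e^{-\Theta(n)}$, which forces the concentration inequality to be run with $\Theta(n)$ coordinates of $O(1)$ sensitivity and forces the uniform bound $\ch\le\gammah+1$ to be carried through from the first step. If one prefers to avoid the permutation form of McDiarmid's inequality, an equivalent route is a Doob martingale that reveals the random matching clause-by-clause together with Azuma's inequality; the bounded-difference verification there costs a short coupling argument but yields the same exponent.
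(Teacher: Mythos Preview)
Your proposal is correct and essentially matches the paper's proof. The paper computes the same clause-type expectation, invokes the same continuity argument (choosing $\epsilon_2$ small so that $\ch$ is close to $\gammah$, hence $\ex[\ch_k]$ within $\epsilon/2$ of $\gammah_k$), and then runs the Doob-martingale-plus-Azuma route you list as your alternative---revealing the syntactic type of one clause at a time and bounding increments by $2$---followed by a union bound over $k\in\{0,\ldots,3\}$. Your use of McDiarmid's inequality for uniform permutations is an equivalent packaging of the same bounded-differences idea; the spot you flag (deducing closeness of $\ellp,\elln,\ch$ to $\lap,\lan,\gammah$ from membership in $\cN$) is treated just as tersely in the paper.
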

All the previous lemmata establish a connection 
between the uniform probability spaces $\Fnm$ 
and $\Cndh^\epsilon$. In order to prove Theorem~\ref{thm:main}, 
it remains to bound the probability that a configuration 
$\conf\in\Cndh^\epsilon$ is simple and satisfiable, as it is done in the following result.
\begin{lem}\label{lem:hard} 
There exists $\epsilon>0$ and $0<\beta<1$ such that for any $\dbh\in\cN(n,\deltabh,\epsilon)$,
\[
\pr_{\Cndh^\epsilon}(\SAT\wedge\SIMPLE) = O(\beta^n).
\]
\end{lem}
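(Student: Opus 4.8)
The plan is to bound $\pr_{\Cndh^\epsilon}(\SAT\wedge\SIMPLE)$ by the first-moment method applied to \emph{single-flip} satisfying assignments, i.e.\ truth assignments $\sigma$ of a configuration $\conf\in\Cndh^\epsilon$ such that $\sigma$ satisfies $\conf$ but flipping any single variable currently set to $0$ up to $1$ destroys satisfaction. Since every satisfiable configuration has at least one single-flip satisfying assignment (repeatedly flip $0\to1$ while this preserves satisfaction; this terminates because the number of $1$'s strictly increases), we have
\[
\pr_{\Cndh^\epsilon}(\SAT\wedge\SIMPLE)\le \ex_{\Cndh^\epsilon}\big[\,\#\{\text{single-flip sat.\ assignments of }\conf\}\,\mathbf 1_{\SIMPLE}\,\big]\le \ex_{\Cndh^\epsilon}\big[\#\{\text{single-flip sat.\ assignments}\}\big].
\]
The task then reduces to estimating this expectation over the uniform space $\Cndh^\epsilon$, which is a space of random $3$D matchings on labelled literal-copies, with a near-Poisson degree sequence $\dbh$ concentrated on ``light'' variables ($i,j\le n^{1/6}$), near-balanced clause types $\ch_k\approx\gammah_k$, and with the positively-unbalanced constraint $i\ge j$ built into $\deltabh$ (no variable has $i<j$).

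**First steps.** I would fix a candidate assignment $\sigma$ and decompose variables and clauses according to $\sigma$: let the ``true'' variables be those $\sigma$ sets to $1$ and ``false'' those set to $0$. For $\sigma$ to be single-flip satisfying, (a) every clause must be satisfied, and (b) for every false variable $x$, at least one clause must be ``critical'' for $x$ in the sense that it is satisfied \emph{only} by $x$ being false (all its other two copies are falsified by $\sigma$) and contains a negative copy of $x$ (so flipping $x$ to $1$ would falsify that clause). The expected number of such $\sigma$ is then a sum over: the choice of how many variables of each 2D-degree $(i,j)$ are set true vs.\ false; the choice, within each clause type $\ch_k$, of how the $k$ positive and $3-k$ negative copies land on true/false variables; and a factor counting the matchings consistent with all these choices, divided by the total number of matchings. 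Because all degrees are $O(n^{1/6})$ and the clause-type sequence is pinned down up to $\epsilon$, this is a sum of $\poly(n)$ many terms (indexed by the scaled split parameters), each of the form $e^{\,n\, f(\text{params})+o(n)}$ by Stirling. I would bound the sum by $\poly(n)$ times its maximum term, so that it suffices to show $\max f<0$.

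**The maximization and the main obstacle.** Writing $f$ explicitly via the entropy/Stirling expansion of multinomial coefficients and matching counts, the free parameters are: for each light degree class, the fraction set true; and the ``type assignment'' fractions describing, per syntactic clause type, how copies distribute over true/false literal slots and which false-variable clauses are designated critical. The constraints (clause-type sums, copy-count identities $\eqref{eq:lpln_c}$, and the single-flip criticality requirement for \emph{every} false variable) are the link that keeps $f$ bounded away from $0$. The heavy lifting — and the step I expect to be the real obstacle — is reducing this high-dimensional maximization to a tractable low-dimensional one and then certifying the bound numerically. As flagged in the introduction, one first proves a \emph{truncation} lemma so that only finitely many degree classes matter, then uses Lagrange multipliers to collapse the optimization to a small system (a $3\times3$ system per the introduction), solves it numerically to find the maximum is some $\gamma$-dependent value, checks it is negative at $\gamma=4.4898$, proves no boundary maximum occurs, and finally does an exhaustive grid sweep over the reduced domain with rigorous interval bounds to confirm the numerical optimum is global. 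The exponentially small factor $e^{-\Theta(n^{1/3}\log n)}$ lost from dropping $\mathbf 1_{\SIMPLE}$ (Lemma~\ref{lem:simple} is only used later, to pull the bound back to $\Fnm$) is negligible against $e^{-\Omega(n)}$, so once $\max f<0$ is established with room to spare, the $O(\beta^n)$ conclusion follows with $\beta=e^{\max f}<1$.
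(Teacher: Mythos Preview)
Your proposal is essentially the paper's approach: first moment on single-flip satisfying assignments, parametrize by degree-class true/false fractions and clause subtypes, Stirling to get an exponential-rate function, truncate to finitely many degree classes, Lagrange multipliers collapsing to a $3\times3$ system, boundary analysis, and a numerical sweep to certify the maximum is below~$1$. One implementation detail worth noting: rather than proving a separate truncation lemma, the paper builds the truncation directly into a weakened satisfiability notion $\models^\flat$ in which all ``heavy'' variables (those with $i>M$ or $j>M$, here $M=23$) are treated as jokers that satisfy any clause they touch; the single-flip (blocking-clause) condition is then imposed only on light $0$-variables, and the role of $\SIMPLE$ is precisely to ensure $\SAT^\flat\wedge\SIMPLE=\SAT^{\flat\prime}\wedge\SIMPLE$ before the indicator is dropped.
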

The proof of Lemma~\ref{lem:hard} is given in Section~\ref{sec:lemhard} below. Finally, we can complete the proof of our main result.

\begin{proof}[Proof of Theorem~\ref{thm:main}]
Choose $\epsilon,\epsilon_2,\epsilon_1>0$ and $0<\beta<1$ which satisfy the statements of Lemmata~\ref{lem:dem}, \ref{lem:clause} and~\ref{lem:hard}. We can assume that $\epsilon_1<\epsilon_2<\epsilon$.
Let $\cN_1 = \cN(n,\deltabh,\epsilon_2) \subset \cN(n,\deltabh,\epsilon)$.
Using Lemmata~\ref{lem:clause} and \ref{lem:hard}, for any $\dbh\in\cN_1$
\begin{equation}
\pr_{\Cndh}(\SAT\wedge\SIMPLE) \le \pr_{\Cndh^\epsilon}(\SAT\wedge\SIMPLE) + O(\beta^n) = O(\beta^n).
\label{eq:mainthm1}
\end{equation}
Let $\cN_2 = \cN(n,\mbf\delta,\epsilon_1)$, and consider any given $\db\in\cN_2$. 
Then {from}~\eqref{eq:whycore}, \eqref{eq:mainthm1} and Lemma~\ref{lem:dem}, 
\begin{align}
\pr_{\Cnd}(\SAT\wedge\SIMPLE) &\le \pr_{\Chnd}(\SAT\wedge\SIMPLE)
\notag\\
&= \pr_{\Chnd}(\dbh\notin\cN_1) + \sum_{\dbh\in\cN_1} \pr_{\Cndh}(\SAT\wedge\SIMPLE) \pr_{\Chnd}(\dbh)
\notag\\
&\le O(\beta^{n^{1/2}}) + O(\beta^n) \pr_{\Chnd}(\dbh\in\cN_1)
= O(\beta^{n^{1/2}}),
\label{eq:mainthm2}
\end{align}
where  the events $(\dbh)$ and $(\dbh\in\cN_1)$ on the probability space $\Chnd$ denote  
that the degree sequence is $\dbh$ and  the degree sequence is in $\cN_1$.
Finally, from~\eqref{eq:FtoC},  Lemma~\ref{lem:Fnm}, Lemma~\ref{lem:simple} and~\eqref{eq:mainthm2} we get
\begin{align*}
\pr_{\Fnm}(\SAT) &= \pr_{\Fnm}(\db\notin\cN_2) + \sum_{\db\in\cN_2} \pr_{\Cnd}(\SAT\mid\SIMPLE) \pr_{\Fnm}(\db)
\\
&= o(1) + \sum_{\db\in\cN_2} \frac{\pr_{\Cnd}(\SAT\wedge\SIMPLE)}{\pr_{\Cnd}(\SIMPLE)}\pr_{\Fnm}(\db)
\\
&\le o(1) + e^{\Theta(n^{1/3}\log n)} \sum_{\db\in\cN_2} \pr_{\Cnd}(\SAT\wedge\SIMPLE)\pr_{\Fnm}(\db)
\\
&\le o(1) + e^{\Theta(n^{1/3}\log n)} O(\beta^{n^{1/2}}) \pr_{\Fnm}(\db\in\cN_2)= o(1),
\end{align*}
where  $(\db)$ and $(\db\in\cN_2)$ denote respectively the event
that a random formula $\form\in\Fnm$ has degree sequence $\db$ and the event that the 
degree sequence of $\form$ belongs to $\cN_2$.
\end{proof}
\section{Proof of Lemma~\ref{lem:hard}}\label{sec:lemhard}
Let $\cN(n,\deltabh,\gammabh,\epsilon)$ be the set of tuples $(\dbh,\cbh)$ such that $\dbh\in\cN(n,\deltabh,\epsilon)$ and $\cbh=(\ch_k)_{0\le k\le3}$ is a tuple of scaled naturals satisfying~\eqref{eq:lpln_c} (recall also from~\eqref{eq:lpln_d} the definition of $\ellp$ and $\elln$), and moreover $|\ch_k-\gammah_k|\le\epsilon$. For each $(\dbh,\cbh)\in\cN(n,\deltabh,\gammabh,\epsilon)$, we define $\cC_{n,\dbh,\cbh}$ to be the uniform probability space of all configurations with degree sequence $\dbh$ and clause-type sequence $\cbh$.
In order to prove the lemma, it suffices to show that for any $(\dbh,\cbh)\in\cN(n,\deltabh,\gammabh,\epsilon)$ we have
\[
\pr_{\cC_{n,\dbh,\cbh}}(\SAT\wedge\SIMPLE) = O(\beta^n).
\]
Hence, we consider $\dbh$, $\cbh$ and the probability space $\cC_{n,\dbh,\cbh}$ to be fixed throughout this section, and we try to find a suitable bound for $\pr(\SAT\wedge\SIMPLE)$.

We need some definitions.
Let us fix any given configuration $\conf\in\cC_{n,\dbh,\cbh}$. A \emph{light} variable of $\conf$ is a variable with $i\le M$ positive occurrences and $j\le M$ negative occurrences in $\conf$ (we will use here and in the numerical calculations that follow the value $M=23$). The other variables are called \emph{heavy}. We consider a weaker notion of satisfiability in which heavy variables are treated as jokers and are always satisfied regardless of their sign in the formula and their assigned value.
Given a configuration $\conf\in\cC_{n,\dbh,\cbh}$ and a truth assignment $A$, we say that $A\models^\flat\conf$ iff each clause of $\conf$ contains at least one heavy variable or at least one satisfied occurrence of a light variable.
Let $\SAT^\flat$ be the set of configurations $\conf\in\cC_{n,\dbh,\cbh}$ for which there exists at least one truth assignment $A$ such that $A\models^\flat\conf$. Clearly, if $A\models\conf$, then also $A\models^\flat\conf$, and hence $\SAT\subset\SAT^\flat$. We still introduce a further restriction to satisfiability in a way similar to~\cite{kkks98} and~\cite{DuBo97}, in order to decrease the number of satisfying truth assignments of each configuration without altering the set of satisfiable configurations (at least without alterating this set for simple configurations). Given a configuration $\conf\in\cC_{n,\dbh,\cbh}$ and a truth assignment $A$, we say that $A\models^{\flat\prime}\conf$ iff $A\models^\flat\conf$ and moreover each light variable which is assigned the value zero by $A$ appears at least once as the only satisfied literal of a \emph{blocking clause} (i.e. a clause with one satisfied negative literal and two unsatisfied ones). Let $\SAT^{\flat\prime}$ be the set of configurations which are satisfiable according to this latter notion. Notice that if $\conf\in\SIMPLE$, then $\conf\in\SAT^{\flat\prime}$ iff $\conf\in\SAT^\flat$ (by an argument analogous to the one in~\cite{kkks98} and~\cite{DuBo97}). Therefore, we have
\[
\pr(\SAT\wedge\SIMPLE) \le \pr(\SAT^\flat\wedge\SIMPLE) = \pr(\SAT^{\flat\prime}\wedge\SIMPLE) \le \pr(\SAT^{\flat\prime}).
\]
Let $X$ be the random variable counting the number of satisfying truth assignments of a randomly chosen configuration $\conf\in\cC_{n,\dbh,\cbh}$ in the $\SAT^{\flat\prime}$ sense. We need to bound
\begin{equation}\label{eq:1mm}
\pr(\SAT^{\flat\prime}) = \pr(X>0) \le \ex X=\frac{|\{(\conf, A) \st \conf\in\cC_{n,\dbh,\cbh}, \; A \models^{\flat\prime} \conf \}|}{|\cC_{n,\dbh,\cbh}|}.
\end{equation}
In the following subsection, we obtain an exact but complicated expression for $\ex X$ by a counting argument, and then we give a simple asymptotic bound which depends on the maximization of a particular continuous function over a bounded polytope. The next subsection contains the maximization of that function.
%
%
%
%
\subsection{Asymptotic bound on $\ex X$}
First, we compute the denominator of the rightmost member in~\eqref{eq:1mm}.
\begin{align*}
|\cC_{n,\dbh,\cbh}| &=
\binom{n}{(\dh_{i,j}n)_{i,j}} \binom{\ellp n}{\ch_1n,2\ch_2n,3\ch_3n}
\binom{\elln n}{3\ch_0n,2\ch_1n,\ch_2n}
\frac{(3\ch_0n)!}{(\ch_0n)! 6^{\ch_0n}} \frac{(2\ch_1n)!}{2^{\ch_1n}} \frac{(2\ch_2n)!}{2^{\ch_2n}} \frac{(3\ch_3n)!}{(\ch_3n)! 6^{\ch_3n}}
\\
&= \frac{n!}{\prod_{i,j} (\dh_{i,j}n)!} \frac{(\ellp n)!(\elln n)!}
{2^{\ch n}3^{(\ch_0+\ch_3)n}(\ch_0n)!(\ch_1n)!(\ch_2n)!(\ch_3n)!}
\end{align*}
In order to deal with the numerator in~\eqref{eq:1mm}, we need some definitions. Let us consider any fixed $\conf\in\cC_{n,\dbh,\cbh}$ and any assignment $A$ such that $A\models^{\flat\prime}\conf$. We will classify the variables, the clauses and the copies of literals in $\conf$ into several types, and define parameters counting the scaled number of items of each type. Variables are classified according to their degree. A variable is said to have degree $(i,j)$ if it appears $i$ times positively and $j$ times negatively in $\conf$. Let $\cL$ and $\cH$, respectively,  be the set of possible degrees for light and heavy variables.
\[
\cL = \{(i,j)\in\nat^2 \st 0\le i,j\le M\}, \qquad
\cH = \{(i,j)\in\nat^2 \st i>M \text{ or } j> M\}.
\]
We also consider an extended notion of degree for light variables which are assigned $0$ by $A$. One of such variables has extended degree $(i,j,k)$ if it has degree $(i,j)$ and among its $j$ negative occurrences $k$ appear in a blocking clause (being the only satisfied literal of the clause). Let
\[
\cL' = \{(i,j,k)\in\nat^3 \st 0\le i\le M,\;1\le k\le j\le M\},
\]
be the set of possible extended degrees for these light $0$-variables.
For each $(i,j)\in\cL$, let $t_{i,j}$ be the scaled number of light variables assigned $1$ by $A$ with degree $(i,j)$ in $\conf$. For each $(i,j,k)\in\cL'$, let $f_{i,j,k}$ be the scaled number of light variables assigned $0$ by $A$ with extended degree $(i,j,k)$ in $\conf$. We must have
\begin{equation}\label{eq:tf's}
t_{i,j} + \sum_{k=1}^j f_{i,j,k} = \dh_{i,j}, \qquad \forall (i,j)\in\cL.
\end{equation}
On the other hand, we classify the copies of literals occurring in $\conf$ into five different types depending on their sign in $\conf$, their assignment by $A$ and whether they belong or not to a blocking clause. Each copy receives a label from the set
\[
\cS = \{\tps,\tnsf,\tnsr,\tpu,\tnu\},
\]
where the labels $\tps$, $\tpu$, $\tnsf$, $\tnsr$ and $\tnu$ denote positive-satisfied, positive-unsatisfied, negative-satisfied inside a blocking clause, negative-satisfied inside a non-blocking clause and negative-unsatisfied, respectively. It is useful to consider as well coarser classifications of the copies of literals in $\conf$ and thus we define the types $\tp$, $\tn$ and $\tns$ which correspond to positive, negative and negative-satisfied copies, respectively. Also, let
\[
\cS' = \{\tps,\tns,\tpu,\tnu\} \quad\text{and}\quad\cS'' = \{\tp,\tn\}.
\]
For each of the types $\sigma\in\cS\cup\cS'\cup\cS''$ that we defined, let $\ell_\sigma$ be the scaled number of copies of type $\sigma$. Note that $\ellp$ and $\elln$ were already defined (see~\eqref{eq:lpln_d} and~\eqref{eq:lpln_c}). Also, let $h_\sigma$ be the scaled number of copies of type $\sigma$ which come from heavy variables (recall that these copies are always satisfied by definition regardless of their sign). In view of
\[
h_\tps = \sum_\cH i\dh_{i,j}, \qquad h_\tns = \sum_\cH j\dh_{i,j}
\]
and of~\eqref{eq:lpln_d} and~\eqref{eq:lpln_c}, we observe that $\ellp$, $\elln$, $h_\tps$ and $h_\tns$ are constants which do not depend on the particular choice of $(\conf,A)$. The parameters $h_\tnsf$ and $h_\tnsr$ depend on the particular $(\conf,A)$ and satisfy
\begin{equation}\label{eq:h's}
h_\tnsf + h_\tnsr = h_\tns.
\end{equation}
The parameters $\ellps$, $\ellpu$, $\ellnsf$, $\ellnsr$ and $\ellnu$ also depend on $(\conf,A)$ and can be expressed as
\begin{align}
\ellps &= \sum_{\cL} it_{i,j} + h_\tps, &
\ellpu &= \sum_{\cL'} if_{i,j,k}, &
\ellnsf &= \sum_{\cL'} kf_{i,j,k} + h_\tnsf,
\notag\\
\ellnsr &= \sum_{\cL'} (j-k)f_{i,j,k} + h_\tnsr, &
\ellnu &= \sum_{\cL} jt_{i,j}.
\label{eq:tfh-ell's}
\end{align}
Finally, the clauses of $\conf$ are classified into $16$ extended types (not to be mistaken with the four syntactic types defined immediately before~\eqref{eq:lpln_c}). Each type is represented by a $2\times2$ matrix from the set
\[
\cA = \left\{ \alpha=\mtrx{\tps(\alpha)}{\tns(\alpha)}{\tpu(\alpha)}{\tnu(\alpha)} \in \mathbb N^4 \;:\; \sum_{\sigma\in\cS'} \sigma(\alpha) = 3, \; \tps(\alpha)+\tns(\alpha)>0 \right\}.
\]
A clause is said to be of extended type $\alpha=\mtrx{\tps(\alpha)}{\tns(\alpha)}{\tpu(\alpha)}{\tnu(\alpha)}$ if for each $\sigma\in\cS'$ the clause contains $\sigma(\alpha)$ copies of literals of type $\sigma$. Notice that all clauses of extended type $\alpha$ also contain the same number of copies of type $\sigma$ for all other $\sigma\in\cS\cup\cS''$ and thus we can define $\sigma(\alpha)$ to be this number.
For each $\alpha\in\cA$, let $c_\alpha$ be the scaled number of clauses of extended type $\alpha$ (while $\ch_k$, $0\le k\le3$ is the number of clauses of syntactic type $k$, i.e.\ with $k$ positive literals). We have
\begin{equation}\label{eq:c's}
\sum_{\substack{\alpha\in\cA\\ \tp(\alpha)=k}} c_\alpha = \ch_k.
\end{equation}
The parameters $\ellps$, $\ellpu$, $\ellnsf$, $\ellnsr$ and $\ellnu$ can also be expressed in terms of the $c_\alpha$ by
\begin{equation}\label{eq:c-ell's}
\ell_\sigma = \sum_{\alpha\in\cA} \sigma(\alpha)c_\alpha, \quad \forall\sigma\in\cS.
\end{equation}
We now consider the following equations:
\begin{equation}\label{eq:ell's}
\ellps+\ellpu = \ellp,
\qquad
\ellnsf+\ellnsr+\ellnu = \elln
\end{equation}
\begin{align}
\label{eq:tfh-ell's2}
\ellps &= \sum_{\cL} it_{i,j} + h_\tps,
&
\ellnsf &= \sum_{\cL'} kf_{i,j,k} + h_\tnsf,
&
\ellnsr &= \sum_{\cL'} (j-k)f_{i,j,k} + h_\tnsr
\\
\label{eq:c-ell's2}
\ellps &= \sum_{\alpha\in\cA} \tps(\alpha)c_\alpha,
&
\ellnsf &= \sum_{\alpha\in\cA} \tnsf(\alpha)c_\alpha,
&
\ellnsr &= \sum_{\alpha\in\cA} \tnsr(\alpha)c_\alpha
\end{align}
In view of~\eqref{eq:lpln_d} and~\eqref{eq:lpln_c}, the system of equations \{\eqref{eq:tf's}, \eqref{eq:h's}, \eqref{eq:tfh-ell's}, \eqref{eq:c's}, \eqref{eq:c-ell's}\} is equivalent to \{\eqref{eq:tf's}, \eqref{eq:h's}, \eqref{eq:c's}, \eqref{eq:ell's}, \eqref{eq:tfh-ell's2}, \eqref{eq:c-ell's2}\}.

So far we verified that the constraints \{\eqref{eq:tf's}, \eqref{eq:h's}, \eqref{eq:c's}, \eqref{eq:ell's}, \eqref{eq:tfh-ell's2}, \eqref{eq:c-ell's2}\} express necessary conditions for the parameters of any particular $(\conf,A)$, with $\conf\in\cC_{n,\dbh,\cbh}$ and $A\models^{\flat\prime}\conf$. Now we will see that they are also sufficient, in the sense that for each tuple of parameters satisfying the above-mentioned constraints we will be able to construct pairs $(\conf,A)$.

Let $\bar t = (t_{i,j})_{\cL}$, $\bar f = (f_{i,j,k})_{\cL'}$, $\bar h = (h_{\tnsf},h_{\tnsr})$, $\bar c = (c_\alpha)_{\alpha\in\cA}$, $\bar\ell = (\ell_\sigma)_{\sigma\in\cS}$ and
\[
K=|\cL|+|\cL'|+2+|\cA|+|\cS| = (M+1)^2(1+M/2) + 23.
\]
We define the bounded polytope $\cP(\dbh,\cbh)\subset\real^K$ as the set of tuples $\bar x = (\bar t, \bar f, \bar h, \bar c, \bar\ell)$ of non-negative reals satisfying \{\eqref{eq:tf's}, \eqref{eq:h's}, \eqref{eq:c's}, \eqref{eq:ell's}, \eqref{eq:tfh-ell's2}, \eqref{eq:c-ell's2}\}, and consider the following set of lattice points in $\cP(\dbh,\cbh)$:
\[
\cI(n,\dbh,\cbh) = \cP(\dbh,\cbh) \cap \left(\frac1n\nat\right)^{K}.
\]
For any tuple of parameters $\bar x\in\cI(n,\dbh,\cbh)$, we proceed to count the number of pairs $(\conf,A)$, with $\conf\in\cC_{n,\dbh,\cbh}$ and $A\models^{\flat\prime}\conf$, satisfying these parameters. We denote this number by $T(\bar x,n,\dbh,\cbh)$. In order to do the counting, we first consider the number of ways of assigning to each variable a truth value and a degree (and an extended degree as well for light $0$-variables). This gives
\[
2^{\sum_\cH \dh_{i,j}n}\binom{n}{(t_{i,j}n)_{\cL}, (f_{i,j,k}n)_{\cL'}, (\dh_{i,j}n)_{\cH}}.
\]
Notice that the only constraints required at this stage for the parameters are these in~\eqref{eq:tf's}. Then we have to choose for each light $0$-variable which $k$ of the $j$ negative satisfied copies contribute to blocking clauses, and also (taking into account~\eqref{eq:h's}) which $h_\tnsf$ of the $h_\tns$ negative copies of heavy variables contribute to blocking clauses:
\[
\left(\prod_{\cL'}\binom{j}{k}^{f_{i,j,k}n}\right) \binom{h_\tns n}{h_\tnsf n,h_\tnsr n}.
\]
So far, we have assigned labels in $\cS$ to all copies of literals in $\conf$. This construction is compatible with~\eqref{eq:tfh-ell's}. Now we have to decide which specific copies of each type $\sigma\in\cS$ will be used in the construction of clauses of each extended type $\alpha\in\cA$. The number of ways of doing that is
\[
\prod_{\sigma\in\cS} \binom{\ell_{\sigma}n}{(\sigma(\alpha)c_{\alpha}n)_{\alpha\in\cA}},
\]
where it is sufficient that the $\bar c$ parameters satisfy~\eqref{eq:c's} and that the $\bar\ell$ parameters are expressible as in~\eqref{eq:c-ell's}. Finally, the number of ways of constructing $c_\alpha$ clauses of each extended type $\alpha\in\cA$ is $\prod_{\alpha\in A} W(\alpha),$ where
\[
W(\alpha) = \frac{(w(\alpha)c_\alpha n)! (c_\alpha n)!^{2-w(\alpha)}}{(w(\alpha)!)^{c_\alpha n}} =
\left\{\begin{aligned}
&(c_\alpha n)!^2 & \text{if } w(\alpha)=1,
\\
& \frac{(2c_\alpha n)!}{2^{c_\alpha n}} & \text{if } w(\alpha)=2,
\\
& \frac{(3c_\alpha n)!}{(c_\alpha n)!6^{c_\alpha n}} & \text{if } w(\alpha)=3,
\end{aligned}\right.
\]
and $w(\alpha)$ is the number of $0$'s in the matrix $\alpha$. Putting everything together, we have that
\begin{align*}
T(\bar x,n,\dbh,\cbh) &= 2^{\sum_\cH \dh_{i,j}n}\binom{n}{(t_{i,j}n)_{\cL}, (f_{i,j,k}n)_{\cL'}, (\dh_{i,j}n)_{\cH}}
\\
&\quad\left(\prod_{\cL'}\binom{j}{k}^{f_{i,j,k}n}\right)
\binom{h_\tns n}{h_\tnsf n,h_\tnsr n}
\prod_{\sigma\in\cS} \binom{\ell_{\sigma}n}{(\sigma(\alpha)c_{\alpha}n)_{\alpha\in\cA}}
\prod_{\alpha\in A} W(\alpha).
\end{align*}
Hence
\[
\ex X = \frac{1}{|\cC_{n,\dbh,\cbh}|} \sum_{\bar x\in\cI(n,\dbh,\cbh)} T(\bar x,n,\dbh,\cbh).
\]
To characterize the asymptotic behaviour of $T(\bar x,n,\dbh,\cbh)/|\cC_{n,\dbh,\cbh}|$ with respect to $n$, we define
\[
F(\bar x) = \frac{\prod_{\sigma\in\cS}{\ell_{\sigma}}^{\ell_{\sigma}}}{\prod_{\cL}{t_{i,j}}^{t_{i,j}}  \prod_{\cL'}{\left(f_{i,j,k}/\binom{j}{k}\right)}^{f_{i,j,k}} {h_\tnsf}^{h_\tnsf} {h_\tnsr}^{h_\tnsr} \prod_{\alpha\in\cA} \big((w(\alpha)!/2)c_\alpha\big)^{c_\alpha}}
\]
and
\[
B(\dbh,\cbh) = 2^{\sum_{\cH}\dh_{i,j}} {h_\tns}^{h_\tns} \prod_{\cL}{\dh_{i,j}}^{\dh_{i,j}} \frac{3^{c_0+c_3}{c_0}^{c_0}{c_1}^{c_1}{c_2}^{c_2}{c_3}^{c_3}} {\ellp^\ellp \elln^\elln}.
\]
Then we use the following form of Stirling's inequality which holds for any $k\in\nat$:
\[
\sqrt{2\pi(k+1/8)}(k/e)^k\le k!\le\sqrt{2\pi(k+1/4)}(k/e)^k,
\]
and we obtain
\[
\frac{T(\bar x,n,\dbh,\cbh)}{|\cC_{n,\dbh,\cbh}|} \le
\poly_1(n) \big( B(\dbh,\cbh)  F(\bar x) \big)^n,
\]
where $\poly_1(n)$ is some fixed polynomial in $n$ which can be chosen to be independent of $\bar x$, $\dbh$ and $\cbh$ (as long as $\bar x\in\cI(n,\dbh,\cbh)$ and $(\dbh,\cbh)\in\cN(n,\deltabh,\gammabh,\epsilon)$). Moreover, since the size of $\cI(n,\dbh,\cbh)$ is also polynomial in $n$, we can write
\[
\ex X \le \poly_2(n) \left( B(\dbh,\cbh) \max_{\bar x\in\cI(n,\dbh,\cbh)}  F(\bar x) \right)^n
\le \poly_2(n) \left( B(\dbh,\cbh) \max_{\bar x\in\cP(n,\dbh,\cbh)} F(\bar x) \right)^n,
\]
for some other fixed polynomial $\poly_2(n)$.
By continuity, if we choose $\epsilon$ to be small enough, we can guarantee that
\begin{equation}\label{eq:bound_expectation}
\ex X \le \left( (1+ 10^{-7}) B \max_{\bar x\in\cP(n,\deltabh,\gammabh)} F(\bar x) \right)^n,
\end{equation}
where (recall the definition in~\eqref{eq:gammas})
\begin{align}
B = B(\deltabh,\gammabh) &= 2^{\sum_{\cH}\deltah_{i,j}} \left(\sum_{\cH}j\deltah_{i,j}\right)^{\sum_{\cH}j\deltah_{i,j}} \prod_{\cL}{\deltah_{i,j}}^{\deltah_{i,j}} \frac{3^{\gammah_0+\gammah_3}\, {\gammah_0}^{\gammah_0}{\gammah_1}^{\gammah_1}{\gammah_2}^{\gammah_2}{\gammah_3}^{\gammah_3}} {{\lap}^\lap{\lan}^\lan}
\notag\\
&= 2^{\sum_{\cH}\deltah_{i,j}} \left(\sum_{\cH}j\deltah_{i,j}\right)^{\sum_{\cH}j\deltah_{i,j}} \frac{\prod_{\cL}{\deltah_{i,j}}^{\deltah_{i,j}}}{(3\gammah)^{2\gammah}}.
\label{eq:B}
\end{align}

\subsection{Maximization of $\mbf{F(\bar x)}$}
We wish to maximize $F$ or equivalently $\log F$ over the domain $\cP(n,\deltabh,\gammabh)$. We first show the following lemma (for the proof see the appendix):
\begin{lem}\label{lem:boundary}
$F(\bar x)$ does not maximize on the boundary of $\cP(n,\deltabh,\gammabh)$.
\end{lem}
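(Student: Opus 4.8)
The plan is to show that on every face of $\cP(n,\deltabh,\gammabh)$ — that is, whenever one of the coordinates $t_{i,j}$, $f_{i,j,k}$, $h_\tnsf$, $h_\tnsr$, $c_\alpha$ or $\ell_\sigma$ is set to its extreme value $0$ (the polytope lies in the positive orthant and is bounded, so the boundary is a union of such coordinate‑zero faces) — one can perturb $\bar x$ a small amount into the interior and strictly increase $F$. Equivalently, working with $\log F$, I would compute the partial derivative of $\log F$ with respect to a suitable feasible direction pointing into the interior and show it tends to $+\infty$ as the relevant coordinate approaches $0$. This is the standard ``entropy blows up at the boundary'' phenomenon: a term of the form $y^y$ in a denominator contributes $-\,y\log y$ to $\log F$, whose derivative $-(\log y+1)$ tends to $+\infty$ as $y\to 0^+$, while the numerator factors $\ell_\sigma^{\ell_\sigma}$ contribute only bounded derivatives as long as the $\ell_\sigma$ stay bounded away from $0$. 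So the scheme is: (i) handle the ``easy'' boundary pieces where some $t_{i,j}$, $f_{i,j,k}$, $h_\tnsf$, $h_\tnsr$ or $c_\alpha$ vanishes by exhibiting an explicit interior direction along which that coordinate increases from $0$; (ii) separately rule out the faces where some $\ell_\sigma=0$, which cannot happen anyway for $\bar x$ feasible with $(\dbh,\cbh)$ close to $(\deltabh,\gammabh)$, since $\ell_\tps,\ell_\tpu,\ldots$ are forced to be positive by \eqref{eq:ell's}, \eqref{eq:tfh-ell's2} and the fact that $\ellp,\elln,h_\tps,h_\tns$ are bounded away from $0$.

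Concretely, for a coordinate such as $t_{i,j}$ sitting at $0$: by \eqref{eq:tf's} the mass $\dh_{i,j}$ currently resides entirely in the $f_{i,j,k}$'s, so moving an amount $\eta$ from some $f_{i,j,k}>0$ into $t_{i,j}$ keeps \eqref{eq:tf's} satisfied. Such a move also perturbs the $\ell_\sigma$'s through \eqref{eq:tfh-ell's2}, which I would re‑balance by a compensating $O(\eta)$ adjustment of the $c_\alpha$'s and the remaining $\ell_\sigma$'s using \eqref{eq:c-ell's2}, \eqref{eq:c's}, \eqref{eq:ell's}; since all the $\ell_\sigma$'s and the relevant $c_\alpha$'s are strictly positive and bounded away from $0$ on the face in question, this re‑balancing changes $\log F$ by only $O(\eta)$, whereas the $-t_{i,j}\log t_{i,j}$ term changes by $-\eta\log\eta + O(\eta)\to+\infty$ as $\eta\to0^+$. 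Hence $F$ strictly increases, so the boundary point is not a maximizer. The cases $f_{i,j,k}=0$ (move mass in from some $f_{i,j,k'}$ or from $t_{i,j}$, or — if $\dh_{i,j}>0$ forces at least one $f$‑ or $t$‑coordinate positive — from there), $h_\tnsf=0$ or $h_\tnsr=0$ (move mass between them via \eqref{eq:h's}, compensating with $c_\alpha$'s), and $c_\alpha=0$ (move mass in from another $c_{\alpha'}$ with the same syntactic type $\tp(\alpha')=\tp(\alpha)$ so \eqref{eq:c's} is preserved, again compensating the $\ell_\sigma$'s) are all of the same flavour: a logarithmic gain from the vanishing coordinate against a linear loss elsewhere.

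The main obstacle is the bookkeeping in step (ii) and in the re‑balancing: one must make sure that for every coordinate‑zero face there really exists a feasible interior direction — i.e. that the linear system \{\eqref{eq:tf's}, \eqref{eq:h's}, \eqref{eq:c's}, \eqref{eq:ell's}, \eqref{eq:tfh-ell's2}, \eqref{eq:c-ell's2}\} has enough slack, which ultimately reduces to the extended‑type set $\cA$ being rich enough to realize any small redistribution of copy‑types among clauses of a fixed syntactic type, and to the degree‑sequence constraints near $\deltabh$ leaving $\dh_{i,j}>0$ for all $(i,j)\in\cL$ (true since $\deltah_{i,j}>0$ for $i\ge j$ and the $i<j$ entries are exactly $0$, so those faces are built into the polytope and need not be argued about). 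A further subtlety is that $\cP$ depends on $n$ only through the lattice refinement $\frac1n\nat$; the maximizer statement is really about the limiting polytope $\cP(\deltabh,\gammabh)$, and one should note that $F$ and the constraint system are continuous in $(\dbh,\cbh)$, so it suffices to prove the non‑boundary‑maximization for $\cP(\deltabh,\gammabh)$ and invoke continuity exactly as \eqref{eq:bound_expectation} already does. Once these interior directions are exhibited face by face, the lemma follows immediately.
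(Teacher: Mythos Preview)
Your step (ii) contains a genuine gap: it is \emph{not} true that every $\ell_\sigma$ is forced to be strictly positive. Specifically, $\ell_{\tnsr}$ can vanish on $\cP(n,\deltabh,\gammabh)$: from~\eqref{eq:tfh-ell's2} we have $\ell_{\tnsr}=\sum_{\cL'}(j-k)f_{i,j,k}+h_{\tnsr}$, and nothing in \eqref{eq:tf's}--\eqref{eq:c-ell's2} prevents all $f_{i,j,k}$ with $j>k$ and $h_{\tnsr}$ from being zero simultaneously (put each $\dh_{i,j}$ into $t_{i,j}$ and $f_{i,j,j}$ only, set $h_{\tnsf}=h_{\tns}$). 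The paper confirms this by solving, for each $\sigma$, the LP that minimizes $\ell_\sigma$ over the polytope; only $\ell_{\tnsr}$ attains zero.

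This matters because the face $\ell_{\tnsr}=0$ is precisely the delicate one your ``entropy blows up'' heuristic does not cover. The term $\ell_{\tnsr}^{\ell_{\tnsr}}$ sits in the \emph{numerator} of $F$, so pushing $\ell_{\tnsr}$ off zero contributes $+\log 0=-\infty$ to the directional derivative of $\log F$, not $+\infty$. To move off this face and still increase $F$, one must exhibit a feasible direction in which several \emph{denominator} variables that are forced to zero (some $c_\alpha$ with $\tnsr(\alpha)>0$, some $f_{i,j,k}$ with $j>k$, and $h_{\tnsr}$) simultaneously increase from zero, so that their combined $-\log 0$ contributions outnumber the single $+\log 0$ from $\ell_{\tnsr}$. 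The paper does exactly this in its Case~3, constructing an explicit direction that lifts $\ell_{\tnsr}$, one $c_\alpha$, $f_{2,2,1}$ and $h_{\tnsr}$ off zero at once (three $-\log 0$ terms against one $+\log 0$). Your proposal, by asserting all $\ell_\sigma>0$, skips this case entirely, and your generic ``logarithmic gain versus linear loss'' template does not apply to it.
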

Since $\log F$ does not maximize on the boundary of its domain, the maximum must be attained at a critical point of $\log F$ in the interior of $\cP(n,\deltabh,\gammabh)$. We use the Lagrange multipliers technique and characterize each critical point of $\log F$ in terms of the solution of a $3\times 3$ system (the details are given in the appendix, Section~\ref{sec:lagrange}). The system is numerically solved with the help of Maple, which finds just one solution. We express the maximum of $F$ over $\cP(n,\deltabh,\gammabh)$ in terms of this solution, and multiply it by $B$ given in~\eqref{eq:B}, and from~\eqref{eq:bound_expectation} we obtain the bound
\begin{equation}\label{number:final}
\ex X \le \left( (1+ 10^{-7}) 0.9999998965 \right)^n,
\end{equation}
which concludes the proof of Lemma~\ref{lem:hard}, since $(1+ 10^{-7}) 0.9999998965<1$.

Note that the validity of our approach relies on the assumption that the solution of the $3\times3$ system found by Maple is unique, which implies that the critical point of $\log F$ we found is indeed the global maximum (if an alternative solution exists it could happen that at the corresponding critical point the function $F$ attains a value greater than the maximum obtained).

In order to be more certain about the correctness of~\eqref{number:final} we performed the following alternative experiment: Let $\cP_{\bar\ell}$ be the polytope obtained by restricting $\cP(n,\deltabh,\gammabh)$ to the coordinates $\ell_\tps,\ell_\tpu,\ell_\tnsf,\ell_\tnsr,\ell_\tnu$. Observe that this is a $3$-dimensional polytope in $\real^5$, since its elements are determined by the values of the coordinates $\ell_\tps,\ell_\tnsf,\ell_\tnsr$. We performed a sweep over this polytope by considering a grid of $100$ equispaced points in each of the three dimensions. For each of the $100^3$ fixed tuples of $(\ell_\tps,\ell_\tnsf,\ell_\tnsr)$ which correspond to the points on the grid, we determine the remaining two coordinates of $\cP_{\bar\ell}$, and maximize $\log F$ restricted to those fixed values of $\bar\ell$. Observe that in this case $\log F$ is strictly concave and thus has a unique maximum which can be efficiently found by any iterative Newton-like algorithm. We checked, again using Maple, that the value obtained for each fixed tuple of $\bar\ell$ is below the maximum in~\eqref{number:final}.

\newpage\begin{center}{\LARGE\bf Appendix}\end{center}\appendix
\section{Proof of Lemma~\ref{lem:simple}}
In order to prove the lemma, we give a lower bound on the number of simple configurations and then divide it by the total number of configurations.
The total number of configurations is
\[
\frac{(3cn)!}{6^{cn} (cn)!}.
\]
To obtain a lower bound on the number of simple configurations, we construct a subset of these. Consider the set of labelled copies of literals according to the degree sequence $\db$. Among those copies which correspond to variables with one unique occurrence, we select any $2 \lceil 9n^{1/3} \rceil$ (recall that in $\cN(n,\deltab,\epsilon_1)$ there is a linear number of such variables). Call this set $\cU_0$. Let $\cU$ be the set of all the remaining copies. We pick an arbitrary order of the copies in $\cU$ and put them into a list. At each step, we remove the first copy of the list, match it with two other suitable copies in the list (which are also removed) and repeat until the list contains exactly $\lceil 9n^{1/3} \rceil$ labelled copies of literals. Let $x$ be the variable corresponding to the first labelled copy of the list at some step of the procedure. If we want to avoid repetitions of variables in clauses then, when choosing the two other copies to be matched with that first one, we have to exclude the at most $2n^{1/6}$ other copies of $x$ (recall that in $\cN(n,\deltab,\epsilon_1)$ the degrees of all variables are at most $2n^{1/6}$). If we want to avoid multiple clauses, we have to exclude the at most $8n^{1/3}$ copies of the variables which are already in some clause together with some copy of $x$. Therefore, at most $\lceil 9n^{1/3} \rceil$ copies are excluded in total at each choice.
This gives rise to at least
\begin{equation}\label{eq:boundsimpleconf}
\frac{(3cn-3\lceil 9n^{1/3} \rceil)!}{6^{cn-\lceil 9n^{1/3} \rceil}(cn- \lceil 9n^{1/3}\rceil)!},
\end{equation}
different ways of grouping the copies in $\cU$ into clauses except for the $\lceil 9n^{1/3}  \rceil$ copies which remain in the list at the end of the procedure. Finally, each one of the $\lceil 9n^{1/3} \rceil$ remaining copies is matched arbitrarily with any two copies in $\cU_0$ to create a clause, and this completes the construction of the configurations. The choice of $\cU_0$ guarantees that all the configurations we obtained are simple. Hence, dividing~\eqref{eq:boundsimpleconf} by the total number of configurations gives a lower bound of $e^{-\Theta(n^{1/3}\log n)}$.

\section{Proof of Lemma~\ref{lem:dem}}
Given $\conf\in\Cnd$ with $\db\in\cN(n,\mbf\delta,\epsilon_1)$, we apply the following pure literal elimination algorithm to $\conf$: at each time step $t \geq 0$, one literal occurrence is chosen uniformly at random from all pure literal occurrences appearing in $\conf=\conf(t)$ at time $t$ (if the variable of the chosen literal occurrence is not yet assigned a value, it is set such that this literal occurrence is satisfied) and the clause containing this occurrence is eliminated. The algorithm stops when there is no more pure literal occurrence left. The resulting configuration is denoted by $\conft$. It is easy to see that $\conft$ is unique and independent of the order of elimination of pure literal occurrences. Moreover, at any time step $t$, the algorithm retains the conditional randomness of the configuration, i.e., conditional under having any (fixed) degree sequence $\db$ at time $t$, any configuration $\conf$ obeying this degree sequence and whose image $\form=\pi(\conf)$ is a simple formula, is equally likely to appear at time $t$ in this algorithm.
To analyze the expected degree distribution of $\conft$, we introduce the following variables: for any $0 \leq i,j \leq n$, $Y_{i,j}(t)$ denotes the random variable counting the number of variables with $i$ positive and $j$ negative occurrences at time $t$, and $Y_0(t)$ denotes the random variable counting the number of pure literal occurrences
at time $t$, i.e., $Y_0(t)=\sum_{i > 0}iY_{i,0}(t)+\sum_{j > 0} jY_{0,j}(t)$. Also define by $L(t)$ the random variable counting the total number of literal
occurrences at time $t$. Clearly, $L(t+1)-L(t)=-3$. Note also that
$Y_0(t)=L(t)-\sum_{i,j > 0}(i+j) Y_{i,j}(t)$.
Defining by $G(t)$ the sequence of $Y_0(t),Y_{1,1}(t),\ldots,Y_{n,n}(t)$, we have that for any $t$, for any $i,j > 0$ and conditional upon any values of $G(t)$ 
\begin{align*}
\ex\big(Y_{i,j}(t+1)-Y_{i,j}(t) \mid G(t)\big) &=
\frac{2}{L}\big((i+1)Y_{i+1,j}(t)+(j+1)Y_{i,j+1}(t)-(i+j)Y_{i,j}(t)\big)(1+o(1)),
\\
\ex\big(Y_0(t+1)-Y_0(t) \mid G(t)\big) &=
\left(\frac{2}{L}\Big(\sum_{i > 0}iY_{i,1}(t)+\sum_{j > 0}jY_{1,j}(t)-Y_0(t)\Big)-1\right)(1+o(1)).
\end{align*}
If we now interpolate the variables $L(t)$ and $Y_{ij}(t)$, by defining the scaled versions $y_{i,j}(t)=\frac{1}{n}Y_{i,j}(tn)$ for $i,j > 0$,
$\ell(t)=\frac{1}{n}L(tn)$ and $y_0(t)=\frac{1}{n}Y_0(tn)$, (to understand the limiting behavior as $n$ tends to infinity) this suggests the following system of
differential equations:
\begin{align*}
\frac{d\ell}{dt} & =  -3 \\
\frac{dy_{i,j}}{dt} &= \frac{2}{\ell}\big((i+1)y_{i+1,j}+(j+1)y_{i,j+1}-(i+j)y_{i,j}\big), \qquad i,j > 0, \\
\frac{dy_0}{dt} &= \frac{2}{\ell}\Big(\sum_{i > 0} iy_{i,1}+\sum_{j > 0} jy_{1,j}-y_0\Big)-1
\end{align*}
with the initial conditions $\ell(0)=3\gamma$, $y_0(0)=\sum_{i > 0} i \; d(i,0)+\sum_{j > 0} j \; d(0,j)$, and for $i,j > 0$ we have $y_{i,j}(0)=d(i,j)$, where
$d(i,j)$ is the scaled number of variables of $\conf$ appearing $i$ times positively and $j$ times negatively. It can be easily seen that $\ell(t)=3\gamma-3t$. Furthermore, defining the auxiliary function $b(t)=(1-t/\gamma)^{2/3}$, it can be checked,
that a solution (its uniqueness will be proven below) for the system of differential equations involving $y_{i,j}$ is the following function:
$$
y_{i,j}(t)=\sum_{k \geq i}\sum_{\ell \geq j} d(k,l) \binom{k}{i}\binom{\ell}{j}(b(t))^{i+j}(1-b(t))^{k-i}(1-b(t))^{\ell-j}.
$$
Also, it follows then that $$y_0(t)=3\gamma-3t-\sum_{i,j > 0} (i+j)y_{i,j}(t).$$
We are interested in the smallest value of $t$ for which $y_0(t/n)=0$. This value of $t$ will be the stopping time $T_\cD$ of our process (for our particular system of differential equations, starting with $\conf\in\Cnd$ with $\db\in\cN(n,\mbf\delta,\epsilon_1)$, we empirically observed that $T_\cD/n$ is between 0.15 and 0.16, and hence the resulting configuration $\conft$ a.a.s. still contains many literal occurrences. This observation is used to simplify the statement of Wormald's theorem).

To prove the uniqueness of the solution and the concentration of the resulting distribution, we use the following version of Wormald's theorem tailored to our specific system of ODE's:
\begin{thm}[Wormald~\cite{wormald99}]
Suppose that for any value of $G(t)$ and any $Y_{i,j}(t)$ (throughout the statement of the theorem $Y_{i,j}(t)$ also includes the case of $Y_0(t)$) we have
\[
\ex[Y_{i,j}(t+1)-Y_{i,j}(t) \mid G(t)] = f_{i,j}(t/n,\{Y_{i,j}(t)/n\}_{i,j})+O(1/n),
\]
where the $f_{i,j}: \mathbb{R}^{a} \rightarrow \mathbb{R}$ ($a$ being equal to the number of variables $Y_{i,j}$ plus $1$ for the variable $t$) are continuous functions with all Lipschitz constants uniformly bounded. Suppose also that $|Y_{i,j}(t+1)-Y_{i,j}(t)|\leq \beta$, and suppose that there is an order of the functions $f_{i,j}$ such that any $f_{i,j}$ depends on at most the terms preceding it in this order (including itself). Then, for some large enough constant $C$ and $0 < \alpha < 1/3$, the system of differential equations
\[
dy_{i,j}(x)/dx=f_{i,j}(x,\{y_{i,j}(x)\})
\]
with the initial conditions $y_{i,j}(0)=Y_{i,j}(0)/n$ has a unique solution and moreover, for any
$t \leq T_\cD - Cn^{1-\alpha}$, we have
\[
Y_{i,j}(t)=ny_{i,j}(t/n)+O(n^{1-\alpha})
\]
with probability at least $1-O(n^{2+\alpha}e^{-n^{1-3\alpha}})$.
\end{thm}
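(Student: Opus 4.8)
The plan is to derive this statement as a specialization of the general differential-equation-method theorem of Wormald~\cite{wormald99}: I would cast the pure-literal-elimination process as a discrete-time Markov chain on configurations, verify the standard hypotheses of that general theorem in our setting, and then read off the conclusion with the stated error and probability bounds. Concretely, the state at time $t$ is recorded by $G(t)=(Y_0(t),(Y_{i,j}(t))_{i,j>0})$, the filtration is $\cF_t=\sigma(G(0),\dots,G(t))$, and $L(t)=3\gamma n-3t$ is deterministic. The empirically observed bound $T_\cD/n\le 0.16<\gamma$ enters only to keep $L(t)=\Theta(n)$, equivalently $\ell(t/n)\ge 3(\gamma-0.16)>0$, for all $t\le T_\cD-Cn^{1-\alpha}$, so that every quantity below stays non-degenerate throughout the relevant range.

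\emph{Bounded one-step changes and the trend hypothesis.} First I would check that removing one clause deletes three literal occurrences which, in a simple configuration, belong to three distinct variables, so each $Y_{i,j}$ changes by at most $3$; a variable whose positive (or negative) degree drops to $0$ becomes pure and contributes all of its remaining occurrences to $Y_0$, and since all degrees are at most $n^{1/6}$ on $\cN(n,\deltab,\epsilon_1)$ this gives $|Y_0(t+1)-Y_0(t)|=O(n^{1/6})$; hence the bounded-difference parameter may be taken to be $\beta=O(n^{1/6})$, which is $o(n^{1-\alpha})$ for every $\alpha<1/3$. Next I would verify the trend: $\ex[Y_{i,j}(t+1)-Y_{i,j}(t)\mid\cF_t]$ is obtained by summing over the deleted occurrences the probability that a given other occurrence falls into the deleted clause; each such probability equals (number of eligible occurrences)$/L(t)$ up to a factor $1+O(1/n)$ from sampling without replacement, and since $L(t)=\Theta(n)$ this reproduces exactly the displayed $f_{i,j}$ and $f_0$ with an additive error of order $O(\beta/n)=O(n^{-5/6})$, as Wormald's theorem requires.

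\emph{Lipschitz bounds, triangular dependency, and the solution.} On the domain $\{s\le T_\cD/n\}$ the functions $f_{i,j}$ and $f_0$ are rational with denominator $3\gamma-3s$ bounded below by $3(\gamma-0.16)>0$, hence $C^1$ and Lipschitz there; I would order the coordinates by decreasing $i+j$ and place $Y_0$ last, so that $f_{i,j}$ depends only on $y_{i,j},y_{i+1,j},y_{i,j+1}$ (all of degree $\ge i+j$) and $f_0$ only on coordinates of degree $\ge2$ together with $y_0$, which is the triangular-dependency hypothesis. Existence and uniqueness of the solution with the given initial conditions then follow from Picard--Lindel\"of; alternatively I would substitute the closed form $y_{i,j}(t)=\sum_{k\ge i}\sum_{\ell\ge j}d(k,\ell)\binom{k}{i}\binom{\ell}{j}b(t)^{i+j}(1-b(t))^{k-i+\ell-j}$ with $b(t)=(1-t/\gamma)^{2/3}$, use $b'(t)=-\tfrac{2}{3\gamma}(1-t/\gamma)^{-1/3}$ and a Pascal-type rearrangement of the binomial sums to verify it solves the system, and invoke uniqueness to conclude it is the solution; the identity $y_0(t)=3\gamma-3t-\sum_{i,j>0}(i+j)y_{i,j}(t)$ follows from $Y_0(t)=L(t)-\sum_{i,j>0}(i+j)Y_{i,j}(t)$.

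\emph{The main obstacle.} The delicate part is uniformity: the number of tracked coordinates grows like $n^{1/3}$ and both $\beta$ and the relevant Lipschitz factors grow like $n^{1/6}$, so I would have to check carefully that Wormald's error analysis still yields the claimed additive error $O(n^{1-\alpha})$ and failure probability $O(n^{2+\alpha}e^{-n^{1-3\alpha}})$. The cleanest route is to apply the Azuma--Hoeffding inequality directly to the Doob martingale $Y_{i,j}(t)-Y_{i,j}(0)-\sum_{s<t}\ex[\,Y_{i,j}(s+1)-Y_{i,j}(s)\mid\cF_s\,]$, whose increments are $O(\beta)=O(n^{1/6})$ over $O(n)$ steps, take a union bound over the $O(n^{1/3})$ coordinates and over the $O(n)$ time steps, and then use the triangular ordering to propagate concentration from the higher-degree $y_{i,j}$ down to the lower-degree ones and finally to $y_0$; choosing $\alpha\in(0,1/3)$ small enough absorbs all the polynomial-in-$n$ losses. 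Getting this bookkeeping to close with the stated exponents is where I expect the proof to require the most attention.
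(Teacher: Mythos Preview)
The paper does not prove this statement: it is quoted verbatim as a tailored restatement of Wormald's differential-equation method, with the proof delegated to~\cite{wormald99}. So there is no ``paper's own proof'' to compare against. More importantly, almost everything in your first two sections---the bounded one-step changes, the computation of the conditional increments and their match to the $f_{i,j}$, the Lipschitz check on $\{s\le T_\cD/n\}$, the triangular order by decreasing $i+j$, and the verification of the closed-form $y_{i,j}(t)$---is not a proof of the theorem but a verification of its \emph{hypotheses} for the pure-literal process. That verification is precisely what the paper does in the paragraphs immediately \emph{after} the theorem statement; it is the application, not the proof. Only your final paragraph addresses the actual content of Wormald's result (Azuma--Hoeffding on the Doob martingale of each coordinate, union bound over coordinates and time, propagation along the triangular order), and that outline is correct and is indeed how such theorems are established.

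Your ``main obstacle'' does flag something real. You observe that $|Y_0(t+1)-Y_0(t)|$ can be as large as $\Theta(n^{1/6})$: if one of the two other occurrences in the deleted clause belongs to a variable of degree $(1,j)$ (or $(i,1)$) with $j$ (or $i$) large, that variable becomes pure and contributes all $j$ (or $i$) of its remaining occurrences to $Y_0$. The paper asserts $|Y_0(t+1)-Y_0(t)|\le 3$ when it applies the theorem, which does not cover this worst case. You also note that the number of tracked coordinates is $\Theta(n^{1/3})$ rather than bounded, which is not the standard setting of Wormald's theorem. Both points are genuine subtleties that the paper's write-up glosses over; your proposed remedy (take $\beta=O(n^{1/6})$ and absorb the polynomial-in-$n$ losses into the choice of $\alpha\in(0,1/3)$, using the triangular structure to avoid a blow-up in the accumulated Lipschitz constants) is the natural one.
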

Defining the $f_{i,j}$'s to be the right hand sides of the system of differential equations of the $y_{i,j}$, i.e.,
\[
f_{i,j}(x,y_{i,j},y_{i,j+1},y_{i+1,j}) = \frac{2(-(i+j)y_{i,j}+(i+1)y_{i+1,j}+(j+1)y_{i,j+1})}{3\gamma-3x},
\]
and
\[
f_0(x,y_0,y_{1,1},y_{2,1},\ldots,y_{n,1},y_{1,2},\ldots,y_{1,n})=\frac{2}{3\gamma-3x}(-y_0+\sum_{i > 0} iy_{i,1}+\sum_{j > 0} jy_{1,j})-1,
\]
it is easily seen that these functions are continuous and satisfy a uniformly bounded Lipschitz constant.
Next, it is a routine calculation to show that for any $1 \leq i,j \leq n$
\[
|\ex(Y_{i,j}(t+1)-Y_{i,j}(t)\mid G(t)) - f_{i,j}\left(t/n, Y_{i,j}(t/n), Y_{i,j+1}(t/n),Y_{i+1,j}(t/n)\right)| = O(1/n)
\]
and
\[
|\ex(Y_{0}(t+1)-Y_{0}(t)\mid G(t)) - f_{0}\left(t/n, Y_0(t/n), \ldots,Y_{1,n}(t/n)\right)| = O(1/n).
\]
Also, we have
\[
\max_{1 \leq i,j \leq n} |Y_{i,j}(t+1)-Y_{i,j}(t)| \leq 2
\qquad\text{and}\qquad
|Y_0(t+1)-Y_0(t)| \leq 3.
\]
Moreover, we can rearrange the $f_{i,j}$'s in descending order of the sum of $i$ and $j$ (being $f_0$ the last element of this order), and hence we can apply Wormald's theorem. That is, for any $t \leq T_\cD - Cn^{1-\alpha}$, we have for any $i,j > 0$
\[
Y_{i,j}(t)=ny_{i,j}(t/n)+O(n^{1-\alpha})
\]
and
\[
Y_0(t)=ny_0(t/n)+O(n^{1-\alpha})
\]
with probability at least $1-O(n^{2+\alpha}e^{-n^{1-3\alpha}})$. Since in each step $Y_0$ as well as $Y_{i,j}$ change by at most $3$, the change of $Y_{i,j}$ and $Y_0$ caused by the eliminations of pure literal occurrences after time $T_\cD-Cn^{1-\alpha}$ is at most
$O(n^{1-\alpha})$, and thus the concentration results for $Y_{i,j}(t)$ and $Y_0(t)$ hold throughout the whole algorithm.

Now, since originally we start with $\conf\in\Cnd$ with $\db\in\cN(n,\mbf\delta,\epsilon_1)$, we have $|d(i,j)-\frac{e^{-3\gamma}(3\gamma/2)^{i+j}}{i!j!}| \leq \epsilon_1$ for any $i,j \in \nat$. Since the functions $f_{i,j}$ and $f_0$ appearing in the system of ODE's are continuous and their Lipschitz constants are uniformly bounded, by the continuous dependence theorem (see e.g.,\cite{ShwartzWeiss}, Theorem A.68) the solutions of the system of ODE's are continuous with respect to the initial conditions, uniformly for all $n$. That is, there exists an $\epsilon_2'> 0$, such that the degree sequence $\mbf \dt$ of $\conft$ satisfies $|\widetilde{d}(i,j)-\widetilde{\delta}(i,j)| \leq \epsilon_2'$  with probability at least $1-O(n^{2+\alpha}e^{-n^{1-3\alpha}})$, where (using the notation of $b$ and $\delta$ from above)
$$
\widetilde{\delta}(i,j)=\sum_{k \geq i}\sum_{\ell \geq j} \delta(k,\ell) \binom{k}{i}\binom{\ell}{j}(b(T_\cD/n))^{i+j}(1-b(T_\cD/n))^{k-i}(1-b(T_\cD/n))^{\ell-j}.
$$
When plugging in the initial condition $\delta(k,\ell)=e^{-3\gamma}\frac{(3\gamma/2)^{k+\ell}}{k!\ell!}$, the expression for $\widetilde{\delta}(i,j)$ can be simplified and we obtain
$$
\widetilde{\delta}(i,j)=e^{-3\gamma b}\frac{(3\gamma b/2)^{i+j}}{i!j!},
$$
where $b=b(t)=(1-t/\gamma)^{2/3}$.
Setting $\alpha=1/6$ and unbalancing $\conft$ (i.e., changing the syntactic sign of a variable appearing more often negatively than positively) we obtain a configuration $\confh$ whose degree sequence $\dbh$  satisfies for all $i,j \in \nat$
$$
|\widehat{d}(i,j)-\widehat{\delta}(i,j)| \leq \epsilon_2
$$
 with probability at least $1-O(e^{-\sqrt{n}})$, where
 $\epsilon_2=2\epsilon_2'$ and $\widehat{\delta}(i,j)$ is defined as follows:
 $$
\widehat{\delta}(i,j)=\left\{ \begin{array}{ll} 2\widetilde{\delta}(i,j), & \mbox{if } i > j, \\
\widetilde{\delta}(i,j), & \mbox{if } i=j, \\
0,& \mbox{if } i < j. \end{array}\right.$$ Setting $\beta=1/e$ concludes the proof of the lemma.
%
\section{Proof of Lemma~\ref{lem:clause}}
Let us fix $k\in\{0,1,2,3\}$. We will show that for a small enough $\epsilon_2>0$ the probability that $|\ch_k-\gammah_k|>\epsilon$ is $O(\beta^n)$ with $0<\beta<1$.
Given a random $\conf\in\Cnd$, we choose an ordering of the $\ch n$ clauses uniformly at random. For each $t=1,\ldots,\ch n$, define the random variable $Y_t$ to be $1$ if the $t$'\,th clause has syntactic type $k$ and $0$ otherwise. These random variables are all identically distributed, and we have that
\[
\pr(Y_t=1) = \frac{\binom{\ellp n}{k} \binom{\elln n}{3-k}}
{\binom{\ellp n+\elln n}{3}} = \binom{3}{k} \frac{[\ellp n]_{k} [\elln n]_{3-k}}
{[\ellp n+\elln n]_{3}}, \qquad \forall t, \; 1\le t\le\ch n.
\]
Let $S=\sum_{t=1}^{\ch n} Y_{t}$ and consider the following Doob martingale
\[
S_i=\ex \big(S \mid Y_{1},\ldots,Y_{i} \big), \qquad \forall i, \; 0\le i\le\ch n.
\]
We have $S_{\ch n}=S$ and also $S=\ch_k n$. Moreover,
\[
S_0=\ex S = \binom{3}{k}
\frac{[\ellp n]_{k} [\elln n]_{3-k}}{[\ellp n+\elln n]_{3}} \ch n
\sim \binom{3}{k} \frac{{\ellp}^{k} {\elln}^{3-k}}{(\ellp+\elln)^{3}} \ch n.
\]
Since $\dbh\in\cN(n,\deltabh,\epsilon_2)$, by choosing $\epsilon_2$ small enough, we can guarantee that $\ch$ is arbitrarily close to $\gammah$ and thus
\begin{equation}\label{eq:triangular}
|S_0/n-\gamma_k|<\epsilon/2.
\end{equation}
Moreover, by a case analysis depending on the syntactic type of the $(i+1)$'\,st clause (and depending on the value of $k$) it can be shown that $|S_{i+1}-S_{i}| \leq 2$ for any $i\in\{0,\ldots,\ch n\}$. Thus, by Azuma's inequality,
\[
\pr (|S_{\ch n} -S_{0}| \geq R) \leq e^{-\frac{R^2}{8\ch n}}.
\]
Setting $R=\epsilon n/2$, and by~\eqref{eq:triangular}, we get
\[
\pr (|\ch_k-\gamma_k| \geq \epsilon) =
\pr (|S_{\ch n}-\gamma_k n| \geq \epsilon n) \le
\pr (|S_{\ch n}-S_0| \geq \epsilon n/2) \le e^{-\Theta(n)}=\Theta(\beta^n),
\]
for some $0 < \beta < 1$.  Taking a union bound over all $k$ completes the proof of the lemma.
%
%
%
%
%
%
%
\section{Proof of Lemma~\ref{lem:boundary}}
We show the equivalent statement that $\log(F)$ does not maximize on the boundary of $\cP=\cP(n,\deltabh,\gammabh)$. 
Let $\bar x$ be a fixed point on the boundary of $\cP$. Since the boundary of $\cP$ consists only of non-negativity constraints of some parameters, this means that some values of $\ell_{\sigma}$, $t_{i,j}$, $f_{i,j,k}$, $c_{\alpha}$, $h_\tnsf$ or $h_\tnsr$ of $\bar x$ have to be zero. It can be shown that the only $\ell_\sigma$ which can be zero is $\ellnsr$: for all other $\ell_\sigma$ one can consider the linear program which contains all constraints and as objective function the minimization of the particular $\ell_\sigma$. It can be shown that the value of the objective function is in all cases except for $\ellnsr$ strictly greater than zero (using CPLEX). In the following we call all $c_\alpha$ for which $\tnsr(\alpha)>0$, all $f_{i,j,k}$ with $M \geq j > k \geq 1$ as well as $h_\tnsr$ (cf.~\eqref{eq:tfh-ell's2} and~\eqref{eq:c-ell's2}) to be \emph{forced}. 
We now make a case analysis depending on which of the remaining parameters is zero. \\
\textbf{Case 1: None of the $\ell_{\sigma}$ is equal to zero.}\\
In this case we know that at least one of the $t_{i,j}$, $f_{i,j,k}$, $c_{\alpha}$, $h_\tnsf$ or $h_\tnsr$ is equal to zero. First observe that the interior of $\cP$ contains at least one feasible point $\bar{x_0}$: since all constraints are linear, one can add additional linear constraints to ensure that all values of $\ell_{\sigma}$, $t_{i,j}$, $f_{i,j,k}$, $c_{\alpha}$, $h_\tnsf$ and $h_\tnsr$ are strictly positive and then check the feasibility of the constraints (we did this using CPLEX). Since $\cP$ is convex, we can find a direction that points towards $\bar{x_0}$. In this direction all the $t_{i,j}$, $f_{i,j,k}$, $c_{\alpha}$, $h_\tnsf$ and $h_\tnsr$ components which were zero must increase (perhaps at different rates), while the non-zero variables can either increase or decrease (and remain non-zero). Hence, the directional derivative in this direction contains at least one term which is $-\log(0)$ (i.e. $+\infty$)  (plus/minus some constants) and hence it is an increasing direction.\\
\textbf{Case 2: $\ellnsr=0$, and in addition to the forced values of $c_\alpha$, $f_{i,j,k}$ and $h_\tnsr$ at least one of the other values of $c_\alpha$, $t_{i,j}$, $f_{i,j,k}$ or $h_\tnsf$ is zero as well.}\\
Using a similar argument as in the previous case, we can find a feasible point $\bar{x_1}$ for which $\ellnsr=0$, all forced parameters are zero, but no other parameters are zero (we did this using CPLEX). Again, by convexity of $\cP$, we can find a direction that points towards $\bar{x_1}$. In this direction $\ellnsr$ and all the forced parameters remain zero, but all other $c_\alpha$, $t_{i,j}$, $f_{i,j,k}$ and $h_\tnsf$ which were zero before increase (there is at least one of them). Hence, the directional derivative in this direction contains at least one term which is $-\log(0)$ (plus/minus some constants), and hence it is an increasing direction.\\
\textbf{Case 3: $\ellnsr=0$, all the forced $c_\alpha$, $f_{i,j,k}$ and $h_\tnsr$ are zero, but nothing else is zero.}\\
In this case we move infinitesimally in the following direction: we increase $\ellnsr$ and decrease $\ellnsf$ by the same infinitesimally small amount $=:\zeta$, we increase $c_\alpha$ for $\alpha \in \left\{\mtrx{1}{0}{1}{1}, \mtrx{1}{0}{0}{2}, \mtrx{2}{0}{1}{0}\right\}$ by $\zeta$, we decrease $c_\alpha$ for $\alpha \in \left\{\mtrx{0}{1}{1}{2}, \mtrx{1}{0}{1}{1}, \mtrx{3}{0}{0}{0}\right\}$ by $\zeta$, and we also increase $f_{2,2,1}$ and $h_\tnsr$ by $\zeta/2$ and decrease $f_{2,2,2}$ and $h_\tnsf$ by $\zeta/2$ (note that since all values of $f_{i,j,k}$, $c_\alpha$ and $h_\tnsf$ which are not forced are strictly positive, all the changes are allowed). It can be checked that when performing all these changes, all constraints remain valid, and by convexity of $\cP$, we can move into this direction. In this direction the variables $\ellnsr$, $c_\alpha$ with $\alpha = \mtrx{1}{0}{1}{1}$, $f_{2,2,1}$ and $h_\tnsr$ which were zero before move away from zero, all other zero variables remain zero, and all other non-zero variables remain non-zero. Hence, the directional derivative in this direction contains  one term $\log(0)$ and three terms $-\log(0)$ (plus/minus some constants), and hence it is an increasing direction.

%
%
%
%
%
\section{Critical points of $\mbf{\log F}$}\label{sec:lagrange}
In order to characterize the critical points of $\log F$ and the value of the maximum of $F$, we need some definitions. Let us consider the functions
\begin{align*}
&\Psi_{i,j} = {\nu_{\tps}}^i + (\nu_{\tnsf}+\nu_{\tnsr})^j - {\nu_{\tnsr}}^j \qquad \forall(i,j)\in\cL,
\\
&\Psi_k = \sum_{\substack{\alpha\in\cA\\ \tp(\alpha)=k}} \frac{2}{w(\alpha)!} {\mu_{\tps}}^{\tps(\alpha)} {\mu_{\tnsf}}^{\tnsf(\alpha)} {\mu_{\tnsr}}^{\tnsr(\alpha)} \qquad \forall k\in\{0,\ldots,3\},
\\
&\Psi = \nu_\tnsf + \nu_\tnsr, \qquad
\Psi_\tp = \nu_\tps\mu_\tps + 1, \qquad
\Psi_\tn = \nu_\tnsf\mu_\tnsf + \nu_\tnsr\mu_\tnsr + 1,
\end{align*}
defined on tuples $(\bar\nu,\bar\mu)=(\nu_\tps,\nu_\tnsf,\nu_\tnsr,\mu_\tps,\mu_\tnsf,\mu_\tnsr)$ of positive reals, and let
\[
L = {\nu_\tps}^{h_\tps} \left(\prod_{\cL}{\Psi_{i,j}}^{\deltah_{i,j}}\right) \Psi^{h_\tns} \left(\prod_{k=0}^3 {\Psi_k}^{\gammah_k}\right) {\Psi_\tp}^{-\lap} {\Psi_\tn}^{-\lan}
\]
\begin{lem}\label{lem:lagrange}
There is a bijective correspondence between critical points of $\log F$ in the interior of $\cP(n,\deltabh,\gammabh)$ and positive solutions of the $6\times6$ system of equations
\begin{equation}
\nabla_{\bar\nu,\bar\mu} (\log L) = 0.
\label{eq:lagrange_gradient}
\end{equation}
Moreover, the value of $F$ at each of these critical points can be expressed in terms of the corresponding solution by
\begin{equation}
\nu_{\tps}^{h_\tps} \left(\frac{\lap}{\Psi_\tp}\right)^{\lap} \left(\frac{\lan}{\Psi_\tn}\right)^{\lan} \prod_{\cL}\left(\frac{\Psi_{i,j}}{\deltah_{i,j}}\right)^{\deltah_{i,j}} \left(\frac{\Psi}{h_\tns}\right)^{h_\tns} \prod_{k=0}^3 \left(\frac{\Psi_k}{\gammah_k}\right)^{\gammah_k}
\label{eq:optimum_lagrange}
\end{equation}
\end{lem}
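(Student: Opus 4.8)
The plan is the standard Lagrange-multiplier analysis of the constrained maximization of the smooth function $\log F$ over the affine polytope $\cP=\cP(n,\deltabh,\gammabh)$ cut out by \{\eqref{eq:tf's}, \eqref{eq:h's}, \eqref{eq:c's}, \eqref{eq:ell's}, \eqref{eq:tfh-ell's2}, \eqref{eq:c-ell's2}\}, arranged so that all but six of the multipliers are eliminated through the ``normalization'' constraints \eqref{eq:tf's}, \eqref{eq:h's}, \eqref{eq:c's}, \eqref{eq:ell's}. Here \emph{interior} means relative interior; the coordinates forced to vanish identically on $\cP$ (those $t_{i,j},f_{i,j,k}$ with $\deltah_{i,j}=0$) play no role. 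Since each coordinate $z$ enters $\log F$ only through a single term $\pm z\log(z/a_z)$ with $a_z$ a constant, $\partial_z\log F=\pm(\log(z/a_z)+1)$, the sign being $+$ for the $\ell_\sigma$ and $-$ for $t_{i,j},f_{i,j,k},h_\tnsf,h_\tnsr,c_\alpha$. Introducing one multiplier per constraint and writing the first-order condition, each stationarity equation exponentiates to a monomial formula for the corresponding coordinate; I would name $\nu_\tps,\nu_\tnsf,\nu_\tnsr$ the exponentials of the multipliers of the three scalar constraints \eqref{eq:tfh-ell's2} and $\mu_\tps,\mu_\tnsf,\mu_\tnsr$ those of \eqref{eq:c-ell's2}.

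The constraints \eqref{eq:tf's}, \eqref{eq:h's}, \eqref{eq:c's}, \eqref{eq:ell's} then determine the remaining multipliers and collapse the monomials to the closed forms $t_{i,j}=\deltah_{i,j}\nu_\tps^i/\Psi_{i,j}$, $f_{i,j,k}=\deltah_{i,j}\binom{j}{k}\nu_\tnsf^k\nu_\tnsr^{j-k}/\Psi_{i,j}$, $h_\tnsf=h_\tns\nu_\tnsf/\Psi$, $h_\tnsr=h_\tns\nu_\tnsr/\Psi$, $c_\alpha=\frac{2}{w(\alpha)!}\,(\gammah_{\tp(\alpha)}/\Psi_{\tp(\alpha)})\,\mu_\tps^{\tps(\alpha)}\mu_\tnsf^{\tnsf(\alpha)}\mu_\tnsr^{\tnsr(\alpha)}$, $\ellps=\lap\nu_\tps\mu_\tps/\Psi_\tp$, $\ellpu=\lap/\Psi_\tp$, together with the three $\tn$-analogues; this uses precisely that $\Psi_{i,j}$, $\Psi$, $\Psi_k$ are the single-variable/single-copy partition functions and that $\nu_\tps\mu_\tps+1=\Psi_\tp$, $\nu_\tnsf\mu_\tnsf+\nu_\tnsr\mu_\tnsr+1=\Psi_\tn$. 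These formulas define, for each positive tuple $(\bar\nu,\bar\mu)$, a point $\bar x(\bar\nu,\bar\mu)$ that satisfies \eqref{eq:tf's}, \eqref{eq:h's}, \eqref{eq:c's}, \eqref{eq:ell's} automatically, so the only constraints left to impose are the six scalar equations \eqref{eq:tfh-ell's2} and \eqref{eq:c-ell's2}. Substituting the closed forms and using the elementary identities $\sum_{k=1}^j\binom{j}{k}k\,\nu_\tnsf^k\nu_\tnsr^{j-k}=\nu_\tnsf\,\partial_{\nu_\tnsf}(\nu_\tnsf+\nu_\tnsr)^j$, $\sum_{\tp(\alpha)=k}\tps(\alpha)\,\frac{2}{w(\alpha)!}\mu_\tps^{\tps(\alpha)}\mu_\tnsf^{\tnsf(\alpha)}\mu_\tnsr^{\tnsr(\alpha)}=\mu_\tps\,\partial_{\mu_\tps}\Psi_k$ and $\partial_{\mu_\tps}\log\Psi_\tp=\nu_\tps/\Psi_\tp$ (with the $\tnsf,\tnsr$ analogues), one checks that \eqref{eq:tfh-ell's2} becomes $\partial_{\bar\nu}\log L=0$ and \eqref{eq:c-ell's2} becomes $\partial_{\bar\mu}\log L=0$; that is, the residual six equations are exactly \eqref{eq:lagrange_gradient}.

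The bijection now follows. Given an interior critical point $\bar x$, the first-order condition (orthogonality of $\nabla\log F(\bar x)$ to the tangent space of $\cP$) supplies multipliers, so $\bar x$ has the monomial form above and equals $\bar x(\bar\nu,\bar\mu)$ for a unique positive $(\bar\nu,\bar\mu)$: uniqueness holds because the parametrization is injective (the ratios $f_{i,j,k}/t_{i,j}$ recover $\nu_\tnsf/\nu_\tps$ and $\nu_\tnsr/\nu_\tps$, a coordinate $t_{i,j}$ with $i>j$, for which $\deltah_{i,j}>0$, then determines $\nu_\tps$, the coordinate $\ellpu=\lap/\Psi_\tp$ determines $\nu_\tps\mu_\tps$, and ratios of the $c_\alpha$ of a common syntactic type recover $\mu_\tnsf,\mu_\tnsr$). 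Since $\bar x(\bar\nu,\bar\mu)$ obeys \eqref{eq:tfh-ell's2} and \eqref{eq:c-ell's2}, the pair $(\bar\nu,\bar\mu)$ solves \eqref{eq:lagrange_gradient}. Conversely, for any positive solution $(\bar\nu,\bar\mu)$ of \eqref{eq:lagrange_gradient}, the point $\bar x(\bar\nu,\bar\mu)$ has all non-forced coordinates strictly positive, lies in $\cP$ by the previous paragraph, and---reading the elimination formulas backwards to produce a valid multiplier vector---is a critical point of $\log F$; the two maps are mutually inverse.

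Finally, for the value, substitute the closed forms into $\log F$ (a sum of $z\log(z/a_z)$ terms over all coordinates) and group by the $\log\nu_\bullet$, $\log\mu_\bullet$, $\log\Psi_\bullet$, $\log\deltah_{i,j}$, $\log\gammah_k$, $\log\lap$, $\log\lan$ factors. Using \eqref{eq:tf's}, \eqref{eq:h's}, \eqref{eq:c's}, \eqref{eq:ell's} the $\deltah$-, $\gammah$-, $h_\tns$-, $\lap$-, $\lan$-exponents collapse into the shape of \eqref{eq:optimum_lagrange}; by \eqref{eq:c-ell's2} the $\log\mu$-terms cancel; and by \eqref{eq:tfh-ell's2} the $\log\nu_\tnsf$- and $\log\nu_\tnsr$-coefficients vanish while the $\log\nu_\tps$-coefficient collapses to $h_\tps$. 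What remains is precisely the logarithm of the right-hand side of \eqref{eq:optimum_lagrange}. I expect the only real difficulty to be organizational: keeping the signs straight in $\partial_z\log F$ (numerator versus denominator factors flip the ``$+1$''), carrying out the multiplier elimination without error, and matching the residual six equations to \eqref{eq:lagrange_gradient}---which works only because $\Psi_\tp$, $\Psi_\tn$, $\Psi_{i,j}$, $\Psi_k$ are defined to be exactly the relevant partition functions; there is no deeper obstacle once the right quantities $\bar\nu,\bar\mu$ are identified.
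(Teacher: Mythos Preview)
Your proposal is correct and follows essentially the same route as the paper: introduce Lagrange multipliers for all constraints, exponentiate the stationarity equations to get monomial expressions for the coordinates, eliminate the multipliers attached to the ``partition constraints'' \eqref{eq:tf's}, \eqref{eq:h's}, \eqref{eq:c's}, \eqref{eq:ell's} so that everything is written in terms of the six variables $(\bar\nu,\bar\mu)$, and then observe that the remaining constraints \eqref{eq:tfh-ell's2}, \eqref{eq:c-ell's2} become exactly $\nabla_{\bar\nu,\bar\mu}\log L=0$. If anything you are more thorough than the paper, which stops after identifying the $6\times6$ system and does not spell out either the injectivity of the parametrization $(\bar\nu,\bar\mu)\mapsto\bar x$ or the substitution that yields the value formula \eqref{eq:optimum_lagrange}.
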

\begin{proof}
We use the Lagrange multipliers technique to characterize the critical points of $\log F$ in the interior of $\cP(n,\deltabh,\gammabh)$. Let $(\rho'_{i,j})_\cL$, $\rho'$, $(\rho'_k)_{k\in\{0,\ldots,3\}}$ and $(\rho'_\tp,\rho'_\tn)$ be the Lagrange multipliers of the equations in~\eqref{eq:tf's}, \eqref{eq:h's}, \eqref{eq:c's} and~\eqref{eq:ell's} respectively. These equations are called \emph{partition constraints}.
Moreover, let $(\nu'_{\tps},\nu'_{\tnsf},\nu'_{\tnsr})$ and $(\mu'_{\tps},\mu'_{\tnsf},\mu'_{\tnsr})$ be the Lagrange multipliers of the equations in~\eqref{eq:tfh-ell's2} and~\eqref{eq:c-ell's2} respectively. After a few manipulations, the standard Lagrange multiplier equations can be written as
\begin{align}
&t_{i,j} = \rho_{i,j} {\nu_{\tps}}^i \quad \forall(i,j)\in\cL, \qquad
f_{i,j,k} = \binom{j}{k} \rho_{i,j} {\nu_{\tnsf}}^k {\nu_{\tnsr}}^{j-k} \quad \forall(i,j,k)\in\cL',
\notag\\
&h_{\tnsf} = \rho \nu_{\tnsf}, \qquad
h_{\tnsr} = \rho \nu_{\tnsr},
\notag\\
&c_\alpha = \frac{2}{w(\alpha)!} \rho_{\tp(\alpha)} {\mu_{\tps}}^{\tps(\alpha)} {\mu_{\tnsf}}^{\tnsf(\alpha)} {\mu_{\tnsr}}^{\tnsr(\alpha)} \quad \forall\alpha\in\cA,
\notag\\
&\ell_{\tps} = \rho_{\tp} \nu_{\tps} \mu_{\tps}, \qquad
\ell_{\tpu} = \rho_{\tp}, \qquad
\ell_{\tnsf} = \rho_{\tn} \nu_{\tnsf} \mu_{\tnsf}
\notag\\
&\ell_{\tnsr} = \rho_{\tn} \nu_{\tnsr} \mu_{\tnsr}, \qquad
\ell_{\tnu} = \rho_{\tn},
\label{eq:lagrange1}
\end{align}
where
\begin{align*}
&\rho_{i,j} = e^{\rho'_{i,j}-1} \quad \forall(i,j)\in\cL, \qquad
\rho = e^{\rho-1},
\\
&\rho_k = e^{\rho'_k-1} \quad \forall k\in\{0,\ldots,3\}, \qquad
\rho_\sigma = e^{-\rho'_\sigma-1} \quad \forall\sigma\in\{\tp,\tn\},
\\
&\nu_\sigma = e^{\nu'_\sigma} \quad \forall\sigma\in\{\tps,\tnsf,\tnsr\},
\qquad
\mu_\sigma = e^{\mu'_\sigma} \quad \forall\sigma\in\{\tps,\tnsf,\tnsr\}.
\end{align*}
Notice that the new variables defined above must be strictly positive. By combining~\eqref{eq:lagrange1} with the partition constraints~\eqref{eq:tf's}, \eqref{eq:h's}, \eqref{eq:c's} and~\eqref{eq:ell's}, the variables $(\rho_{i,j})_\cL$, $\rho$, $(\rho_k)_{k\in\{0,\ldots,3\}}$ and $(\rho_\tp,\rho_\tn)$ are eliminated and $(\bar t,\bar f,\bar h,\bar c,\bar\ell)$ can be expressed just in terms of $\nu_\sigma$ and $\mu_\sigma$ ($\sigma\in\{\tps,\tnsf,\tnsr\}$):
\begin{align}
&t_{i,j} = \deltah_{i,j}\frac{{\nu_{\tps}}^i}{\Psi_{i,j}} \quad \forall(i,j)\in\cL, \qquad
f_{i,j,k} = \deltah_{i,j}\frac{\binom{j}{k} {\nu_{\tnsf}}^k {\nu_{\tnsr}}^{j-k}}{\Psi_{i,j}} \quad \forall(i,j,k)\in\cL'
\notag\\
&h_{\tnsf} = h_\tns \frac{\nu_{\tnsf}}{\Psi}, \qquad
h_{\tnsr} = h_\tns \frac{\nu_{\tnsr}}{\Psi},
\notag\\
&c_\alpha = \gammah_{\tp(\alpha)} \frac{\frac{2}{w(\alpha)!} {\mu_{\tps}}^{\tps(\alpha)} {\mu_{\tnsf}}^{\tnsf(\alpha)} {\mu_{\tnsr}}^{\tnsr(\alpha)}}{\Psi_{\tp(\alpha)}} \quad \forall\alpha\in\cA,
\notag\\
&\ell_{\tps} = \lap \frac{\nu_{\tps} \mu_{\tps}}{\Psi_\tp}, \qquad
\ell_{\tpu} = \lap \frac{1}{\Psi_\tp}, \qquad
\ell_{\tnsf} = \lan \frac{\nu_{\tnsf} \mu_{\tnsf}}{\Psi_\tn},
\notag\\
&\ell_{\tnsr} = \lan \frac{\nu_{\tnsr} \mu_{\tnsr}}{\Psi_\tn}, \qquad
\ell_{\tnu} = \lan \frac{1}{\Psi_\tn}
\end{align}
Then we plug these expressions into the constraints~\eqref{eq:tfh-ell's2} and~\eqref{eq:c-ell's2} and after a few manipulations we obtain the following $6\times6$ system on variables $\nu_\sigma$, $\mu_\sigma$
\begin{subequations}
\begin{gather}
\begin{aligned}
\nu_\tps \left(\lap\frac{\partial\Psi_\tp/\partial_{\nu_\tps}}{\Psi_\tp}\right) &= \nu_\tps \left( \sum_{\cL} \deltah_{i,j}\frac{\partial\Psi_{i,j}/\partial_{\nu_\tps}}{\Psi_{i,j}}\right) + h_\tps,
\\
\nu_\tnsf \left( \lan\frac{\partial\Psi_\tn/\partial_{\nu_\tnsf}}{\Psi_\tn}\right) &= \nu_\tnsf \left( \sum_{\cL} \deltah_{i,j}\frac{\partial\Psi_{i,j}/\partial_{\nu_\tnsf}}{\Psi_{i,j}} + h_\tns\frac{\partial\Psi/\partial_{\nu_\tnsf}}{\Psi}\right),
\\
\nu_\tnsr \left( \lan\frac{\partial\Psi_\tn/\partial_{\nu_\tnsr}}{\Psi_\tn}\right) &= \nu_\tnsr \left( \sum_{\cL} \deltah_{i,j}\frac{\partial\Psi_{i,j}/\partial_{\nu_\tnsr}}{\Psi_{i,j}} + h_\tns\frac{\partial\Psi/\partial_{\nu_\tnsr}}{\Psi}\right),
\end{aligned}
\label{eq:6x6nu}
\\
\begin{aligned}
\mu_\tps \left(\lap\frac{\partial\Psi_\tp/\partial_{\mu_\tps}}{\Psi_\tp}\right) &= \mu_\tps \left( \sum_{k=0}^3 \gammah_k\frac{\partial\Psi_k/\partial_{\mu_\tps}}{\Psi_k}\right),
\\
\mu_\tnsf \left(\lan\frac{\partial\Psi_\tn/\partial_{\mu_\tnsf}}{\Psi_\tn}\right) &= \mu_\tnsf \left( \sum_{k=0}^3 \gammah_k\frac{\partial\Psi_k/\partial_{\mu_\tnsf}}{\Psi_k}\right),
\\
\mu_\tnsr \left(\lan\frac{\partial\Psi_\tn/\partial_{\mu_\tnsr}}{\Psi_\tn}\right) &= \mu_\tnsr \left( \sum_{k=0}^3 \gammah_k\frac{\partial\Psi_k/\partial_{\mu_\tnsr}}{\Psi_k}\right),
\end{aligned}
\label{eq:6x6mu}
\end{gather}
\end{subequations}
which can rewritten as in~\eqref{eq:lagrange_gradient}.
\end{proof}
By restricting our attention to~\eqref{eq:6x6mu}, we observe that each of the variables $\nu_\tps$, $\nu_\tnsf$ and $\nu_\tnsr$ appears only in one of the three equations, and also that we can easily express $\nu_\tps$, $\nu_\tnsf$ and $\nu_\tnsr$ in terms of $\mu_\tps$, $\mu_\tnsf$ and $\mu_\tnsr$. By plugging this expression into~\eqref{eq:6x6nu}, we obtain a $3\times3$ system in terms of the variables $\mu_\tps$, $\mu_\tnsf$ and $\mu_\tnsr$ (recall that only positive solutions of the $\mu_\sigma$ which lead to positive $\nu_\sigma$ are considered). We numerically solved this $3\times3$ system with the help of Maple, using $50$ digits of precision and $M=23$. The unique solution obtained by Maple is
\begin{align*}
\mu_\tps &\approx 1.9972796, & \mu_\tnsf &\approx 0.45029358, &
\mu_\tnsr &\approx 0.33794030,
\\
\nu_\tps &\approx 1.2782018, & \nu_\tnsf &\approx 0.33277280, &
\nu_\tnsr &\approx 0.95336927.
\end{align*}
In view of~\eqref{eq:bound_expectation} and Lemma~\ref{lem:lagrange}, we evaluate~\eqref{eq:optimum_lagrange} at these values and multiply this by $B$ given in~\eqref{eq:B}, and we obtain the bound
\begin{equation}
\ex X \le \left( (1+ 10^{-7}) 0.9999998965 \right)^n.
\end{equation}

\end{document}